\documentclass[11pt]{article}

\usepackage[letterpaper, portrait, margin=1in]{geometry}

\usepackage[usenames,dvipsnames,svgnames,table]{xcolor}

\newcommand*\ie{i.\kern.1em e.\ }
\newcommand*\eg{e.\kern.1em g.\ }

\usepackage[utf8]{inputenc}

\usepackage[normalem]{ulem}
\usepackage{mathtools,bm}
\usepackage{amsthm}
\usepackage{thmtools}
\usepackage[inline]{enumitem}
\usepackage{thm-restate}
\usepackage{amsmath,amsfonts,amssymb,eqparbox,array,ragged2e,stmaryrd,bbm}
\usepackage{physics}
\usepackage{xspace}
\usepackage{tikz}
\usepackage{ifthen}
\usepackage{pgfmath-xfp}
\usetikzlibrary{positioning,decorations.pathreplacing,calligraphy, arrows.meta}
\pgfmathsetseed{3434}
\usepackage{algorithm}
\usepackage{algpseudocode}

\usepackage[colorlinks=true,linkcolor=RawSienna,citecolor=ForestGreen]{hyperref}

\usepackage{thm-restate}
\usepackage{enumitem}
\usepackage[capitalise,nameinlink,noabbrev]{cleveref}

\usepackage{subfig}
\newsavebox{\tempbox}

\usepackage{tikz}

\usetikzlibrary{
    babel,
    arrows,
	calc,
	patterns,
	intersections,
    backgrounds,
    trees,
    mindmap,
    fadings,
	decorations.pathreplacing,
	decorations.pathmorphing,
	decorations.text,
	decorations.markings,
    scopes,
	shapes,
    snakes,
    shapes,
	positioning,
    arrows.meta,
    patterns,
    patterns.meta,
    angles,
    quotes,
    math
}

\tikzfading[name = middle,
    top color = transparent!100,
    bottom color = transparent!100,
    middle color = transparent!30
]

\tikzset{
    >=latex,
    pics/rhom/.style n args = {4}{
        code = {
            \draw[
                thick, #3, fill = #3!50, fill opacity = 0.2, rounded corners = 3pt]
                (-#1, 0) -- (0, -#2) -- (#1, 0) -- (0, #2) -- cycle;
            \node[draw, fill, circle, inner sep = 0pt, minimum size = 0.05cm, #3] (#4) at (0, 0) {};
        }
    },
    linecol/.style n args = {3}{
        postaction = {
            decorate,
            decoration = {
                markings,
                mark = between positions 0 and 0.9 step 0.2pt with {
                    \pgfmathsetmacro\myval{
                        multiply(
                            divide(
                                \pgfkeysvalueof{/pgf/decoration/mark info/distance from start},
                                \pgfdecoratedpathlength
                            ),
                            120
                        )};
                    \pgfsetfillcolor{#3!\myval!#2};
                    \pgfpathcircle{\pgfpointorigin}{#1};
                    \pgfusepath{fill};
                },
                mark = at position 1 with {\arrow[#3]{latex}},
            }
        }
    },
    pics/universe-rect/.style n args = {2}{
        code = {
            \fill[black!10, rounded corners = 2pt] (-0.07, -0.07) rectangle (#1 + 0.07, #2 + 0.07);
            \draw[fill = white, thick] (0, 0) rectangle (#1, #2);
        }
    },
    pics/comm-rect/.style n args = {4}{
        code = {
            \draw[
                thick, #3, fill = #3!50, fill opacity = 0.2, rounded corners = 3pt]
                (-#1, -#2) rectangle (#1, #2);
            \node[draw, fill, circle, inner sep = 0pt, minimum size = 0.05cm, #3] (#4) at (0, 0) {};
        }
    },
    fancy-arrow/.style = {
        draw = black,
        thick,
        single arrow,
        right color = VioletRed!50,
        left color = ForestGreen!50,
        fill opacity = 0.7,
        single arrow head extend = 0.3cm,
        single arrow tip angle = 70,
        single arrow head indent = 0.15cm,
        minimum height = 2.4cm,
        minimum width = 0.8cm,
        shading angle = -45,
        rotate = -90
    }
}

\algnewcommand\algorithmicinput{\textbf{Input: }}
\algnewcommand\Input{\item[\algorithmicinput]}
\algnewcommand\algorithmicoutput{\textbf{Output: }}
\algnewcommand\Output{\item[\algorithmicoutput]}
\algnewcommand{\OneLineIf}[2]{
  \State \algorithmicif\ #1\ \algorithmicthen\ #2}

\newcommand{\ignore}[1]{}

\declaretheorem[name=Theorem]{theorem}
\declaretheorem[sibling=theorem,name=Lemma]{lemma}

\declaretheorem[sibling=theorem,name=Claim]{claim}

\declaretheorem[sibling=theorem,name=Fact]{fact}

\theoremstyle{definition}
\declaretheorem[sibling=theorem,name=Definition]{definition}

\declaretheorem[sibling=theorem,name=Question]{question}

\theoremstyle{remark}

\makeatletter
\def\th@example{%
  \thm@notefont{}
  \normalfont 
}
\def\th@definition{%
  \thm@notefont{}
  \normalfont 
}
\makeatother

\theoremstyle{example}

\renewcommand{\setminus}{\smallsetminus}
\renewcommand{\hat}{\widehat}

\renewcommand{\epsilon}{\varepsilon}
\newcommand{\eps}{\varepsilon}

\newcommand{\poly}{\mathrm{poly}}
\newcommand{\polylog}{\mathrm{polylog}}
\newcommand{\dTV}{d_{\mathrm{TV}}}

\newcommand{\Inf}{\mathrm{Inf}}
\newcommand{\adv}{\mathrm{corr}}
\newcommand{\dist}{\mathrm{dist}}
\newcommand{\avg}{\mathrm{avg}}

\newcommand{\newmeasure}[2]{\newcommand{#1}{{\textup{\sffamily #2}}\xspace}}
\newmeasure{\C}{C}

\ifthenelse{1>0}{

}

\allowdisplaybreaks
\begin{document}
\title{Quantum Advantage in Tolerant Junta Testing}

\author{Avishay Tal\thanks{University of California, Berkeley. Email: \href{mailto:atal@berkeley.edu}{atal@berkeley.edu}. Supported by an NSF CAREER Award
CCF-2145474.}
\and 
Weiqiang Yuan\thanks{EPFL. Email: \href{mailto:weiqiang.yuan@epfl.ch}{weiqiang.yuan@epfl.ch}. Supported by the Swiss State Secretariat for Education, Research and Innovation (SERI) under contract number MB22.00026.}}



\maketitle

\begin{abstract}
We establish the first super-polynomial quantum advantage for the tolerant junta testing problem in the adaptive setting. Specifically, we show that within a certain parameter regime, tolerant $k$-junta testing with high precision can be solved using $\mathrm{poly}(k)$ quantum queries, whereas any classical algorithm requires at least $k^{\Omega(\log k)}$ queries.

The problem of tolerant $k$-junta testing is as follows: given parameters $(k, \epsilon_1, \epsilon_2)$, with $0\le \epsilon_1<\epsilon_2 \le 1/2$, and black-box access to a Boolean function $f$ (defined on $n$ variables), distinguish whether $f$ is $\epsilon_1$-close to some $k$-junta or $\epsilon_2$-far from every $k$-junta. 

We show the quantum advantage for a range of parameters close to $1/2$, for example, $\epsilon_1 = 1/2-1/k$  and $\epsilon_2 = 1/2-1/(2k^2)$. (As such, the problem is more naturally captured using the notion of correlation with closest $k$-junta.)
The (non-adaptive) quantum tester we use was given by a recent work of Bao, Liu, Yao, Ye, and Zhang (SOSA 2026). We slightly adapt their analysis to show that it holds in the above parameter regime. On the other hand, our classical lower bound requires substantial new ideas. Inspired by the lower bound techniques of Chen and Patel (FOCS 2023), we introduce a new hard distribution of ``yes'' instances (i.e., instances with distance at most $\epsilon_1$ to $k$-juntas) that is based on planting an ``approximate-junta'' as follows: we randomly pick $k$ out of $n$ coordinates, and for each fixing of the $k$ coordinates, the $2^{n-k}$ values in the restricted subcube are drawn randomly except for the set of points in an error-correcting code on which we place the same random bit. We show that this distribution is much closer to $k$-juntas than the uniform distribution, but on the other hand, they are indistinguishable with respect to any classical algorithm making $k^{o(\log k)}$ queries.

\end{abstract}

\vspace{2em}

\thispagestyle{empty}
\setcounter{page}{0}


\newpage
\setcounter{page}{1}

\section{Introduction}

A Boolean function $f:\{0,1\}^n\to \{\pm 1\}$ is said to be a $k$-junta if it depends on at most $k$ of $n$ input variables.
Despite representing a structurally simple class of functions, juntas play a foundational role across various areas of theoretical computer science, including computational learning theory~\cite{MOS03,FGKP06,Val15}, 
hardness amplification~\cite{BHKT24,Kumar26}, 
and property testing (see references below), which is the focus of this study.

In the standard junta testing problem, we are given oracle access to a hidden Boolean function $f:\{0,1\}^n\to \{\pm 1\}$.
The objective is to determine whether $f$ is a $k$-junta or $\epsilon$-far from all $k$-juntas.
To be precise, a \emph{randomized} algorithm $\mathcal{A}^f$ is a $(k,\epsilon)$-junta tester if it accepts every $k$-junta $f$ with probability at least $2/3$, while rejecting with probability at least $2/3$ every $f$ that differs from every $k$-junta on at least an $\epsilon$-fraction of inputs.
We measure the tester's efficiency by its query complexity, namely, the number of oracle calls to $f$ (while the running time is not considered).
Ideally, we want the tester's query complexity to be as small as possible in terms of $k$ and $\epsilon$, but independent of $n$ (the input length of $f$).

In the seminal work of Fischer, Kindler, Ron, Safra, and Samorodnitsky~\cite{testingjuntas}, the authors proposed the first junta tester with query complexity depending only polynomially on $k$ and $1/\epsilon$. Since then, a long line of research~\cite{CG04, HK07, atici2007quantum, 
Blais08, Blais09, STW15, ABRdW16, Sag18, CSTWX18, junta-submodular-opt,LCSSX18, LW19,Belovs19,Bsh19,DMN19,ITW21,PRW22,CP23, CNY23, NP24,CDLNS24,CPS26,CPPS25}
has been dedicated to developing more efficient testers and sharper lower bounds, leading to a comprehensive understanding of the problem and its variants.
In the standard model, Blais~\cite{Blais09} provided a tester using $O(k\log k + k/\epsilon)$ queries, which was later shown to be optimal for constant $\epsilon$~\cite{Sag18}.
On the other hand, ~\cite{CSTWX18} shows that any non-adaptive tester must make $\tilde{\Omega}(k^{3/2}/\epsilon)$ queries, nearly matching the $\tilde{O}(k^{3/2}/\epsilon)$ upper bound established by Blais~\cite{Blais08}.
Here, a tester is non-adaptive if its queries do not depend on answers to previous queries; otherwise it is adaptive.

Beyond the classical setting, quantum algorithms have been shown to achieve a quadratic speedup: ~\cite{ABRdW16} provides a quantum tester making only $\tilde{O}(\sqrt{k/\epsilon})$ queries, which is nearly optimal~\cite{BKT20}.
Lastly, recent work has extended these results to more general models, such as distribution-free testing~\cite{LCSSX18,Bsh19} and relative-error testing~\cite{CPPS25}, demonstrating that junta testing remains tractable even under these broader distance measures.
However, this efficiency heavily relies on the assumption of perfect completeness.

\paragraph{Tolerant Testing.} A significant limitation of the standard testing model is that there is no guarantee on the performance of the tester on functions close to satisfying the property. 
A more realistic scenario is that given a black-box function $f$, one seeks to test whether there are $k$ features that explain the output \emph{in most cases}, or $f$ is far from it. This is natural because data in the wild are often noisy and imperfect.
To address this, Parnas, Ron, and Rubinfeld~\cite{PRR06} introduced a natural relaxation of the standard testing model called \emph{tolerant testing}.

In the context of junta testing, a $(k,\epsilon_1,\epsilon_2)$-tolerant junta tester must decide whether a Boolean function $f$ is $\epsilon_1$-close to some $k$-junta or $\epsilon_2$-far from every $k$-junta (for $\epsilon_1<\epsilon_2$) by making oracle calls to $f$.
Standard junta testing naturally emerges as a special case where $\epsilon_1 = 0$.

In sharp contrast to the standard model, tolerant junta testing appears to be much harder. In the non-adaptive setting, Chen et al.~\cite{CDLNS24} proved that any classical tolerant junta tester requires $2^{\Omega(\sqrt{k})}$ queries. Subsequently, this lower bound was shown to be tight by Nadimpalli and Patel~\cite{NP24} (building on~\cite{DMN19, ITW21}), who provided a non-adaptive upper bound matching $2^{\tilde{O}(\sqrt{k\log (1/(\epsilon_2-\epsilon_1))})}$.

In the classical adaptive setting, despite significant effort, our understanding of tolerant junta testing remains limited.
The best-known adaptive $(k,\epsilon_1,\epsilon_2)$-tolerant junta tester has query complexity $2^{\tilde{O}(k^{1/3} \polylog(1/(\epsilon_2-\epsilon_1)))}$~\cite{CPS26}, while the best-known lower bound is $k^{\Omega(\log(1/(\epsilon_2-\epsilon_1)))}$ when $\epsilon_1,\epsilon_2$ are bounded away from $0$ and $1/2$~\cite{CP23}.
In short, even in the parameter regime where $\epsilon_1=0.01$ and $\epsilon_2=0.49$, neither a tester with query complexity $\exp(o(k^{1/3}))$ nor a superpolynomial lower bound is known.

In the quantum world, however, the landscape changes dramatically.
Recent work by Bao, Liu, Yao, Ye, and Zhang~\cite{quantumtester} introduces an elegant, non-adaptive quantum tester with query complexity $O(k\log k)$, provided that $4\epsilon_1<\epsilon_2$.
To evaluate this advantage, Bao et al.~\cite{quantumtester} complement their algorithm with a classical non-adaptive lower bound,
demonstrating a surprising exponential quantum speedup over classical \emph{non-adaptive} tolerant junta testers.

This provable advantage highlights the power of quantum in tolerant junta testing.
However, \cite{quantumtester}'s results do not establish any super-polynomial quantum speedup over classical \emph{adaptive} tolerant junta testers, since the quantum tester in~\cite{quantumtester} works only in a parameter regime that is mutually exclusive of that of the superpolynomial lower bound by \cite{CP23}.%
\footnote{More precisely, the quantum tester is efficient if either the gap $\epsilon_2-\epsilon_1$ is constant, or both $\epsilon_1,\epsilon_2$ are close to $0$, while the only known super-polynomial classical adaptive lower bound~\cite{CP23} requires a sub-constant gap between $\epsilon_1,\epsilon_2$ and that both $\epsilon_1,\epsilon_2$ are bounded away from $0$ and $1/2$, which we will discuss in more detail in~\Cref{sec: overview classical lower bound}.}
This leads to the following natural question:
\begin{quote}
\emph{Do adaptive quantum tolerant junta testers significantly outperform their classical counterparts?}
\end{quote}

\subsection{Main Result}
In this paper, we give an affirmative answer to the above question by showing a super-polynomial quantum speedup against classical adaptive tolerant junta testers.
\begin{theorem}[Main Result]\label{theorem: main}
For every $k\in \mathbb{N}$ and $\epsilon>2^{-k^{.99}}$, there exist $0<\epsilon_1<\epsilon_2<1/2$ such that $\epsilon = \epsilon_2-\epsilon_1$ and the following holds:
\begin{itemize}[noitemsep]
\item There exists a {\sf non-adaptive quantum} $(k,\epsilon_1,\epsilon_2)$-tolerant junta tester with query complexity $O(\poly(k,1/\epsilon))$.
\item Any {\sf adaptive classical} $(k,\epsilon_1,\epsilon_2)$-tolerant junta tester must make at least $k^{\Omega(\log (1/\epsilon))}$ queries.
\end{itemize}
\end{theorem}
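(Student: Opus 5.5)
The plan is to work throughout with correlations rather than distances. Write $\adv_k(f)=\max_{g\text{ a }k\text{-junta}}\mathbb{E}[f g]$, so that $\mathrm{dist}(f,\text{$k$-juntas})=\tfrac12-\tfrac12\adv_k(f)$. The parameters will be $\epsilon_1=\tfrac12-\tfrac{\tau_1}{2}$ and $\epsilon_2=\tfrac12-\tfrac{\tau_2}{2}$ with $\tau_1>\tau_2$. The yes-case is then $\adv_k(f)\ge\tau_1$, the no-case is $\adv_k(f)\le \tau_2$, and $\epsilon=(\tau_1-\tau_2)/2$. Roughly, I will take $\tau_1\approx\epsilon$ and $\tau_2$ somewhat smaller, for instance $\tau_2\le \tau_1/2$. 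With this scaling the gap is $\Theta(\tau_1)$.

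\textbf{Upper bound.} I will invoke the non-adaptive quantum tester of \cite{quantumtester}, which is based on Fourier sampling. Each query produces a sample $S\sim\hat f(S)^2$, and the tester estimates a quantity such as $\max_{|J|\le k}\sum_{S\subseteq J}\hat f(S)^2$. For a $k$-junta $g$ one has $\mathbb{E}[fg]\le \sqrt{\sum_{S\subseteq J}\hat f(S)^2}$. Conversely, the sign of the projection of $f$ onto $J$ achieves correlation at least $\sum_{S\subseteq J}\hat f(S)^2$, since $\mathbb{E}[f\cdot\mathrm{sgn}(f_J)]=\mathbb{E}|f_J|\ge \mathbb{E}[f_J^2]$. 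So the Fourier weight $W_J$ on $J$ satisfies $W_J\le\adv\le\sqrt{W_J}$. The yes-case therefore gives some $J$ with $W_J\ge\tau_1^2$, and the no-case gives $W_J\le\tau_2$ for all $J$. Choosing $\tau_2\le\tau_1^2/2$ makes these distinguishable.

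The tester draws $\poly(k,1/\epsilon)$ Fourier samples. It uses the standard argument that the relevant coordinates appear among the samples carrying significant weight, and then estimates $W_J$ for candidate $J$'s to additive accuracy $\tau_1^2/4$. I expect this to be a slight adaptation of their analysis, with cost $\poly(k,1/\epsilon)$.

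\textbf{Lower bound.} The lower bound uses the planted distribution described in the abstract. For the yes-distribution, pick a random $J\subseteq[n]$ with $|J|=k$ and a linear code $C\subseteq\{0,1\}^{n-k}$ of dual distance $d$. For each $z\in\{0,1\}^k$, set $f$ to a shared random bit $b_z$ on a random coset $x_{\bar J}\in C+a_z$, and to uniformly random values elsewhere. The no-distribution is a uniformly random function, which has $\adv_k\le 2^{-\Omega(n)}$-ish, and in any case $\le\tau_2$ with high probability by a union bound over juntas.

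For the yes-case I need $\adv_k(f)\ge \tau_1$. The junta $g(z)=b_z$ gains a fraction $|C|/2^{n-k}$ of the points, so I must tune the code density so that this fraction is about $\tau_1\approx\epsilon$.

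For indistinguishability I follow \cite{CP23}. An adaptive $q$-query algorithm sees a difference only if two queried points agree on $J$ and both fall in the same planted coset. Bounding this collision-style event requires that, for a random hidden $J$, queried points that are nearby in $J$ do not reveal the coset structure, which uses the dual-distance property. Random $J$ forces a query pair to agree on $J$ only if they are close in Hamming distance, and distant pairs land in a coset jointly with nearly independent probabilities.

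I expect the main obstacle to be the adaptive indistinguishability bound: showing $q=k^{o(\log(1/\epsilon))}$ queries give a vanishing advantage. My first attempt would be a hybrid or coupling argument that reveals $f$ lazily and controls, per query, the probability of hitting a coset whose bit was already exposed. The plan is to trade the density $\epsilon$ against the code distance $d\approx\log(1/\epsilon)$ and a $k^{d}$-type count, as in \cite{CP23}.
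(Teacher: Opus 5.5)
Your overall plan matches the paper's: the correlation reformulation with $\tau_2\le\tau_1^2/2$, the Fourier-sampling tester built on $W_J\le\adv(f,\mathcal{J}_J)\le\sqrt{W_J}$ (exactly \Cref{lemma: advantage equal influence}), a uniform no-distribution, and a yes-distribution that plants a shared bit on a code in each subcube of a hidden $k$-set. The lower bound, however, rests on the wrong code property. What makes the planted function locally uniform is the \emph{minimum distance} of $C$. Two distinct points of one subcube that both lie in $C+a_z$ differ by a nonzero codeword. So two queries that agree on $J$ and are at Hamming distance $<d_{\min}(C)$ cannot both be planted, and their answers are independent and uniform.

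Large dual distance does not give this. Take $C=\{0,1\}\times C'$ with $C'$ of large dual distance; then $C$ also has large dual distance. Yet every coset of $C$ is closed under flipping one coordinate $i\in\overline{J}$, so $\Inf_i[f]\approx(1-|C|2^{-\ell})/2\approx 1/2-\tau_1/2$. Estimating all $n$ influences detects this with $\poly(k,1/\epsilon)$ queries, which is exactly the failure of the ``unsuccessful attempt'' in the paper. Likewise, far-apart pairs are not handled by their coset memberships being nearly independent; they are perfectly correlated when their difference lies in $C$. They are handled because they rarely agree on $J$: $\Pr_J[x_J=y_J]\le\binom{n-d}{k}/\binom{n}{k}\le((n-k)/n)^d$.

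That bound is $k^{-\Omega(d)}$ only if $n-k\le k^{0.99}$. So you must take $n=k+\ell$ with $\ell\approx 2\log(1/\tau_2)=\Theta(\log(1/\epsilon))$, handling larger $n$ by padding. Your claim that a random function has correlation $2^{-\Omega(n)}$ is off: it is $\Theta(2^{-(n-k)/2})$. This same choice of $\ell$ is what puts the no-case below $\tau_2$.

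Now use a length-$\ell$ code of density $\approx\tau_1$ and minimum distance $d=\Theta(\ell)=\Theta(\log(1/\epsilon))$, which Gilbert--Varshamov provides. Then the adaptive step you flag as the main obstacle needs no lazy-revealing hybrid:
\begin{itemize}
\item Fix any root-to-leaf path with queries $x^{(1)},\dots,x^{(m)}$.
\item Call $J$ good if every pair either differs on $J$ or is at distance $<d$.
\item For good $J$, each subcube contains at most one queried planted point, so the path's answers are exactly uniform.
\item Hence every path has probability at least $(1-m^2k^{-\Omega(d)})2^{-m}$ under the yes-distribution, which bounds the total variation distance by $m^2k^{-\Omega(d)}$.
\end{itemize}
Your random shifts $a_z$ are harmless but unnecessary; the paper uses one fixed code per $\overline{S}$.
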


The key to the proof of~\Cref{theorem: main} is a stronger classical adaptive lower bound, inspired by~\cite{CP23}, which applies to a wider range of parameters for which the quantum tester in~\cite{quantumtester} is valid.
The quantum advantage is perhaps better captured by the notion of {\sf correlation} with the closest $k$-junta, which is equivalent to the distance up to a linear transformation.
In this notation, we show that for any $\tau>2^{-k^{.99}}$, 
\begin{description}
    \item[Quantum Tester (\Cref{theorem: quantum tester})]There exists a non-adaptive quantum algorithm, making $\poly(k/\tau)$ queries, that distinguishes between functions with at least $\tau_1=\tau$ correlation and at most $\tau_2=\tau^2/2$ correlation to $k$-juntas.
    \item[Classical Lower Bound (\Cref{theorem: classical lower bound})] On the other hand, we show that distinguishing $\tau_1= \tau$ from $\tau_2= \tau^2/2$ correlation with $k$-juntas (and even $\tau_1 = \tau$ versus $\tau_2 = \tau^c$ for any constant $c>1$) requires $k^{\Omega(\log(1/\tau))}$ classical queries, even adaptively.
\end{description}
By picking $\tau = 1/\poly(k)$, we obtain a quantum algorithm with $\poly(k)$ queries, compared to a quasi-polynomial classical lower bound of $k^{\Omega(\log k)}$ queries.


Prior to our work, super-polynomial classical lower bounds for tolerant junta testing were only known for much closer parameters $\tau_1,\tau_2$  with $\tau_1= \tau_2\cdot (1+o(1))$~\cite{CP23}. 

We view this as a quantum advantage to a natural tolerant property testing problem: distinguish whether there exist $k$ features that explain at least $\tau$ or at most $\tau^2/2$ fraction of the outputs.

\subsection{Proof Overview}

\subsubsection{Quantum Tester}

Our quantum tolerant junta tester is a simple adaptation of the one provided in~\cite{quantumtester}.
\begin{theorem}[{\cite[Theorem~1.3]{quantumtester}}]
 Let $0<\epsilon_1<\epsilon_2<1/2$ be constants such that $4\epsilon_1<\epsilon_2$.
Then, there exists a non-adaptive quantum $(k,\epsilon_1,\epsilon_2)$-junta tester with query complexity $O(k \log k)$.
\end{theorem}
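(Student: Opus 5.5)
The tester works in the Fourier basis. It uses Fourier sampling to learn which variables carry the weight of $f$, and then checks whether the best junta on those variables is close to $f$.

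\textbf{Step 1 (Fourier sampling).} Applying Hadamard gates to a phase-oracle query of $f$ gives a sample $S\subseteq[n]$ with probability $\hat f(S)^2$. This costs one query per sample, and the sampler is non-adaptive.

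\textbf{Step 2 (relating distance to Fourier weight).} For a set $T$ of coordinates, write $W_T=\sum_{S\subseteq T}\hat f(S)^2$.

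\emph{Upper bound on the best correlation.} The best $k$-junta on $T$ is $g=\mathrm{sgn}(f_T)$, where $f_T(x)=\mathbb{E}_y[f(x_T,y)]$ is the average over the remaining coordinates. Its correlation is $\mathbb{E}|f_T|$. By Cauchy--Schwarz,
\[
\mathbb{E}|f_T| \le \sqrt{\mathbb{E} f_T^2}=\sqrt{W_T}.
\]

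\emph{Lower bound on the best correlation.} Since $|f_T|\le 1$, we have $f_T^2\le |f_T|$, so
\[
\mathbb{E}|f_T|\ge \mathbb{E} f_T^2 = W_T .
\]

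\emph{Consequences for the two cases.}
\begin{itemize}
\item If $f$ is $\epsilon_1$-close to a junta on $T^*$ with $|T^*|\le k$, then its correlation is $\mathbb{E}|f_{T^*}|\ge 1-2\epsilon_1$. Hence $W_{T^*}\ge(1-2\epsilon_1)^2\ge 1-4\epsilon_1$.
\item If $f$ is $\epsilon_2$-far from every $k$-junta, then every $|T|\le k$ satisfies $W_T\le \mathbb{E}|f_T|\le 1-2\epsilon_2$.
\end{itemize}
Since $4\epsilon_1<\epsilon_2$, we get $1-4\epsilon_1>1-2\epsilon_2$ with constant room to spare. So the problem reduces to deciding whether some $k$-set $T$ has $W_T\ge 1-4\epsilon_1$ or all $k$-sets have $W_T\le 1-2\epsilon_2$.

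\textbf{Step 3 (finding the relevant set).} Draw $m=O(k\log k)$ Fourier samples. Let $T$ be the set of coordinates that appear in samples $S$ whose union of appearances stays within a heavy small set. Concretely, I would take $T$ to be the $k$ coordinates appearing most often among samples, or the union of samples contained in the eventual candidate set. Then estimate the fraction of samples with $S\subseteq T$ and accept iff this fraction exceeds a threshold between $1-4\epsilon_1$ and $1-2\epsilon_2$.

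\emph{Soundness.} This is straightforward: for any $T$ of size at most $k$, the true mass of samples inside $T$ is at most $1-2\epsilon_2$. A union bound over candidate sets is too costly ($\binom nk$ of them), so I would estimate on fresh samples after fixing $T$, using half the samples to choose $T$ and half to test it.

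\emph{Completeness (the main obstacle).} The chosen $T$ must capture nearly all the weight of $T^*$. The plan here is to use that weight outside $T^*$-subsets is at most $4\epsilon_1$, and to take $T$ as the coordinates that occur in at least a $1/(2k)$-ish fraction of samples. This set has size at most about $2k\cdot(\text{average } |S|)$, which is problematic, so a cleaner choice is needed.

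\emph{Cleaner choice.} Include coordinate $i$ if $\Pr[i\in S]\ge \delta/k$; at most $k$ such coordinates lie within $T^*$. A coordinate $i\notin T^*$ has $\Pr[i\in S]\le 4\epsilon_1$, which is not small enough. So instead I would accept if there exists $T\subseteq\bigcup$ (samples), $|T|\le k$, covering a $\ge 1-3\epsilon_1$ fraction of samples, and handle soundness via the restricted family of candidate $T$ (subsets of sampled coordinates). The $O(k\log k)$ sample count comes from coupon collecting over the at most $k$ relevant coordinates of $T^*$, each with weight $\Omega(1/k)$ in total among those that matter.
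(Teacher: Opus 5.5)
Your Steps 1 and 2 are correct. Step 2 is exactly \Cref{lemma: advantage equal influence} with $W_T=1-\Inf_{\overline{T}}(f)$, and it reduces the problem to a constant gap between $(1-2\epsilon_1)^2$ and $1-2\epsilon_2$.

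\textbf{Soundness.} The gap is in Step 3, which never produces a candidate family that is both small enough to union-bound over and guaranteed to contain a good $T$ in the yes case. Your final test searches over $k$-subsets of the union of the sampled sets, but Fourier samples can be huge. For a uniformly random $f$ (a typical no-instance), a single sample is a nearly uniform subset of $[n]$. So the family already has size $\binom{\Omega(n)}{k}$, which is exactly the $n$-dependent union bound you ruled out, and splitting the samples does not rescue an existential search over such a family.

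\textbf{Completeness.} Coupon collecting only works if the coordinates of $T^*$ that matter have weight $\Omega(1/k)$. You give no bound on the weight lost by missing the light ones.

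\textbf{Threshold.} The threshold $1-3\epsilon_1$ lies above your own yes-case guarantee $(1-2\epsilon_1)^2=1-4\epsilon_1+4\epsilon_1^2$, since $\epsilon_1<\epsilon_2/4<1/8$. The value $W_{T^*}=(1-2\epsilon_1)^2$ is attained, e.g., by a $T^*$-junta $g$ with an $\epsilon_1$-fraction of every subcube $\{x:x_{T^*}=y\}$ flipped, where $f_{T^*}=(1-2\epsilon_1)g$. The threshold must sit strictly between $(1-2\epsilon_1)^2$ and $1-2\epsilon_2$, as you had it earlier.

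\textbf{The missing idea.} This is what the paper's tester (\Cref{alg: quantum tester}) uses: keep only samples with $|S|\le k$. That is, work with level-$k$ influences $\Inf^{\le k}_j[f]=\Pr_{\bm S\sim\mathcal{S}_f}[j\in\bm S,\ |\bm S|\le k]$ instead of $\Pr[j\in\bm S]$. This removes exactly the obstruction you hit. Outside coordinates can have constant influence only through large sets, while $\sum_j\Inf^{\le k}_j[f]\le k$, so at most $k/\theta$ coordinates have level-$k$ influence at least $\theta$. And since every $S\subseteq T^*$ has $|S|\le k$, discarding the light coordinates of $T^*$ is cheap. For $C_\theta=\{j:\Inf^{\le k}_j[f]\ge\theta\}$,
\[
W_{T^*}-W_{T^*\cap C_\theta}\le\sum_{j\in T^*\setminus C_\theta}\Inf^{\le k}_j[f]\le k\theta .
\]
Take $\theta=c/k$ for a constant $c$ smaller than the gap. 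Then $O(k\log k)$ samples yield a set $\widetilde C\supseteq T^*\cap C_\theta$ with $|\widetilde C|=O(k^2)$; this is where your coupon-collecting intuition is right. Another $O(k\log k)$ fresh samples then estimate $W_R$ simultaneously for all $2^{O(k\log k)}$ sets $R\subseteq\widetilde C$ with $|R|\le k$.

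The paper states the loss step via \Cref{lemma: approximation by high infleunced coordinates} in terms of correlation, but in your $W_T$ language the displayed bound suffices. Alternatively, your existential test over all $|T|\le k$ (with a corrected threshold) can be justified by a compression argument. A witness can be taken to be the union of at most $k$ covered samples, so there are only $\binom{m}{\le k}$ effective candidates, each independent of the remaining samples.
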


To demonstrate this, Bao et al.~\cite{quantumtester} first approximate the minimum distance of a Boolean function $f$ and $T$-juntas by $\Inf_{\overline{T}}[f]$, where $\Inf_{\overline{T}}(f)$ is expressed in terms of the Fourier expansion of $f$ as $\sum_{S\subseteq[n]: S \nsubseteq T }\hat{f}_S^2$.
Specifically, they show
\[
\Inf_{\overline{T}}[f]/4\le \dist(f,\mathcal{J}_T)\le \Inf_{\overline{T}}[f],
\]
where $\dist(f,\mathcal{J}_T)$ denotes the minimum distance of a Boolean function $f$ from $\mathcal{J}_T$, the class of juntas on $T$.
Then, they design an  ``influence tester'' that distinguishes between the following two cases:
\begin{itemize}
    \item Yes: $\Inf_{\overline{T}}[f]\le \gamma_1$ for some $T\subseteq [n]$ with size $|T|=k$
    \item No: $\Inf_{\overline{T}}[f]\ge \gamma_2$ for all $T\subseteq [n]$ with size $|T|=k$
\end{itemize}
for any $\gamma_1<\gamma_2$ with query complexity $\mathrm{poly}(k,(\gamma_2-\gamma_1)^{-1})$.

The quantum tester in \Cref{theorem: main} follows directly from combining their influence tester with the following improved characterization of $\adv(f,\mathcal{J}_T)$ in terms of $\inf_{\overline{T}}[f]$:
\[
1-\Inf_{\overline{T}}(f)\le \adv(f,\mathcal{J}_T)\le \sqrt{1-\Inf_{\overline{T}}(f)}.
\]
We emphasize that this characterization is not new: the upper bound was shown in~\cite[Lemma 29]{quantumtester} as an intermediate step, while the lower bound was provided in~\cite{testingjuntas}.

In what follows, we briefly discuss the main idea of the quantum influence tester in~\cite{quantumtester}.
Its efficiency stems from the fact that a single quantum query suffices to sample $S \subseteq [n]$ from the Fourier distribution of $f$. With a sufficient number of such samples, we are able to estimate $\Inf_{\overline{T}}[f]$ for any $T\subseteq [n]$ up to some small error. We emphasize here that while the query complexity is $\poly(k)$, the running time of the algorithm is exponential in $k$—although this is not the focus of this paper.

As our goal is to determine if $\min_{T:|T|=k} \Inf_{\overline{T}}[f]$ is smaller than $\gamma_1$ or greater than $\gamma_2$, we may want a good approximation of $\Inf_{\overline{T}}[f]$ for all $T$ with $|T|=k$.
However, since there are $\binom{n}{k}$ choices of subsets of $[n]$ with size $k$,
a naive union bound over all these subsets will incur a $\log n$ factor to the number of queries, which is intolerable.

De, Mossel, and Neeman~\cite{DMN19} provide a solution by showing that the tolerant junta testing problem for functions with an arbitrarily large input length $n$ can be reduced to the case where $n=\poly(k)$.
Subsequently, Iyer, Tal, and Whitmeyer~\cite{ITW21} further reduce $n$ to $O(k)$, while barely changing the distance to the closest $k$-junta. Bao et al.~\cite{quantumtester} use Fourier samples to get an even simpler and more efficient reduction, using only $O(k \log k)$ quantum queries.




\subsubsection{Classical Lower Bound}\label{sec: overview classical lower bound}

Our classical lower bound is inspired by the work of Chen and Patel~\cite{CP23},
where they give the first super-polynomial adaptive lower bound for tolerant junta testing.
Specifically, they prove the following.
\begin{theorem}[Theorem 1 in~\cite{CP23}]
    Let $0.01\le \epsilon_1<\epsilon_2\le 0.49$ be two parameters such that $\epsilon_2-\epsilon_1\ge 2^{-O(k^{.99})}$.
    Then any $(k,\epsilon_1,\epsilon_2)$-tolerant junta tester requires $k^{\Omega(\log(1/(\epsilon_2-\epsilon_1)))}$ queries.
\end{theorem}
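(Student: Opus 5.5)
The statement is Theorem 1 of \cite{CP23}. I would reprove it by an indistinguishability argument, Yao's principle applied to two distributions over functions. Set $\eta=\epsilon_2-\epsilon_1$ and $d=\Theta(\log(1/\eta))$. Both distributions begin the same way: pick a uniformly random set $J\subseteq[n]$ of size $k$, with $n=\poly(k)$ large. For each $x$, write $f(x)$ as a fixed base function of the $J$-coordinates, $g(x_J)$, flipped by random noise. In the yes distribution, the noise rate in each subcube $x_J=a$ is chosen so that $g$ is the best $k$-junta and the distance is about $\epsilon_1$. In the no distribution, the noise is built so that its low-degree structure on $J$ matches the yes case, but the function has no consistent $k$-junta explanation, so its distance is about $\epsilon_2$. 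Concretely, following CP23, I would plant a function of the form $f(x)=h(x_J, \text{random bits})$, where the two families of $h$ agree on all moments up to degree $d$. This is the standard moment-matching construction: two distributions over $[-1,1]$ with means differing by $\eta$ that agree on their first $d$ moments, which is possible once $d\approx \log(1/\eta)$. Each point of $\{0,1\}^k$ then receives a bias drawn from one of the two distributions.

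The steps are as follows. First, verify the distances. Yes instances are $\epsilon_1$-close to the junta on $J$ with high probability. No instances are $\epsilon_2$-far from every $k$-junta, since with $|J|=k$ and random noise, any other set $T$ does worse; this follows by concentration and a union bound over $T$, using $2^{n-k}$ samples per cell. Second, the main part is the adaptive indistinguishability. A classical adaptive algorithm making $q$ queries only learns about the biases of subcubes that it hits at least twice in distinct points; if it never hits the same cell of $J$ twice, the answers are uniform in both cases. Hitting a cell repeatedly requires finding pairs of queried points that agree on $J$. Since $J$ is random among $\binom nk$ sets, an adaptive algorithm learns $J$ only gradually. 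I would bound the probability that any single cell is hit by at least $d+1$ queries, because with at most $d$ hits per cell the moment matching makes the transcript distributions identical. The claim to prove is that this probability is at most $q^{O(1)}\cdot k^{-\Omega(d)}$.

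The main obstacle is this step under adaptivity. The algorithm can exploit answers to locate coordinates of $J$, for example by binary-search-like influence detection, and so increase collisions. I would handle it by simulating with a ``lazy'' oracle that reveals only $J$-collision structure. I would argue that getting $d+1$ points into one cell requires the queried points to agree on all $k$ coordinates of $J$ while differing elsewhere. I would charge each step of growing this agreement to a $1/\poly(k)$-probability event about the location of the random $J$; setting $n=k^{C}$ makes sets of queries with small pairwise Hamming distance rare. Summing over the $q^{d+1}$ tuples of queries gives failure probability $q^{d+1}k^{-\Omega(d)}$. This is $o(1)$ unless $q\ge k^{\Omega(d)}=k^{\Omega(\log(1/\eta))}$, which gives the bound. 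The condition $\eta\ge 2^{-O(k^{.99})}$ keeps $d\le k^{.99}$, so that degree-$d$ moment matching fits inside $k$ coordinates and the distance concentration still holds.
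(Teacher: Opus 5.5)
There is a genuine gap at your central step, and it is both quantitative and structural.

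\textbf{Quantitative problem.} Suppose each $(d+1)$-tuple of queries is bad with probability $k^{-cd}$. Your union bound gives $q^{d+1}k^{-cd}$, which is $o(1)$ only for $q\ll k^{cd/(d+1)}<k^{c}$. That is a polynomial lower bound, not $k^{\Omega(\log(1/\eta))}$. To reach the claimed bound, the informative event must already be witnessed by a single pair (or $O(1)$) of queries, each bad with probability $k^{-\Omega(\log(1/\eta))}$.

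\textbf{Structural problem.} The event you want to be rare is not rare. The queries are chosen by the algorithm, and with $n=\poly(k)\gg k$ the points $x,x^{\oplus\{1\}},\ldots,x^{\oplus\{d\}}$ all lie in the $J$-cell of $x$ except with probability about $dk/n$. This falsifies your claimed bound $q^{O(1)}k^{-\Omega(d)}$ on some cell receiving $d+1$ queries. Your remark that large $n$ makes small-distance query sets rare has it backwards: large $n$ makes nearby points more likely to share a cell.

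Worse, because your noise is i.i.d.\ within each cell, the same trick yields as many samples of a cell's bias as the algorithm likes. Moment matching only protects against at most $d$ samples per cell. Averaging absolute empirical biases over many cells therefore separates your two distributions with roughly $\poly(1/\eta)=2^{O(\log(1/\eta))}$ queries, so the construction itself, not just the analysis, fails. Even with $n-k$ small, i.i.d.\ per-cell noise makes $\bm{f}(x)$ and $\bm{f}(x^{\oplus\{i\}})$ positively correlated for $i\notin J$ but not for $i\in J$. Influence estimates then reveal $J$; this is exactly the ``unsuccessful attempt'' the paper discusses. (Also, two distributions cannot match their first $d$ moments while their means differ by $\eta$; presumably you mean $\mathbb{E}|X|$.)

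\textbf{The missing idea.} As in the paper's sketch of \cite{CP23}, keep $n-k$ tiny and hide the yes-structure in a global constraint on each cell, so nothing local leaks.
\begin{itemize}
\item Take $n=k+\ell$ with $\ell=\Theta(\log(1/\eta))$, and let $\mathcal{D}_N$ be uniform.
\item Let $\mathcal{D}_Y$ draw $\bm{S}\sim\binom{[n]}{k}$ and, for each $y\in\{0,1\}^{\bm{S}}$, make $\{\bm{f}(y,z)\}_z$ uniform subject to $\prod_z\bm{f}(y,z)=-1$.
\item Every proper subset of a cell is then exactly uniform. So, conditioned on $\bm{S}$, the answers are uniform unless some entire cell is queried.
\item A fully queried cell requires a pair with $x^{(i)}=(x^{(j)})^{\oplus\overline{\bm{S}}}$, i.e., a difference set equal to the random $\ell$-set $\overline{\bm{S}}$. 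This has probability $1/\binom{n}{k}=k^{-\Omega(\ell)}$ per pair when $\ell\le k^{.99}$.
\item A union bound over only $m^2$ pairs bounds the advantage by $m^2/\binom{n}{k}$.
\item The correlations are $\tau_0+2^{-2\ell}-2^{-4\ell}$ versus $\tau_0+2^{-4\ell}$, so $\ell=\Theta(\log(1/\eta))$ gives $k^{\Omega(\log(1/\eta))}$.
\item Padding plus planted biased output bits then move $\epsilon_1,\epsilon_2$ into $[0.01,0.49]$.
\end{itemize}
It is precisely the smallness of $\overline{S}$ that makes a pair agreeing on $S$ yet differing in many coordinates unlikely. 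That is the opposite of your choice $n=\poly(k)$.
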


Unfortunately, their result together with the quantum tester from \cite{quantumtester} does not directly imply~\Cref{theorem: main}. When $\epsilon_2-\epsilon_1\to 0$, the quantum tester requires that either $\epsilon_1,\epsilon_2\to 0$ or $\epsilon_1, \epsilon_2\to 1/2$; while \cite{CP23}'s lower bound requires $\epsilon_1,\epsilon_2$ to be bounded away from both $0$ and $1/2$.
To further clarify why the lower bound in \cite{CP23} does not apply to our setting, we provide a brief overview of the key ideas of their proof—some of which also play important roles in our proof.
\paragraph{Chen-Patel's Proof.}
The key of~\cite{CP23}'s proof is the construction of two hard distributions $\mathcal{D}_Y$ and $\mathcal{D}_N$, which are \emph{mostly} supported on ``yes'' and ``no'' instances, respectively.
By Yao's minimax lemma, it suffices to show that any deterministic tester making a polynomial number of queries cannot distinguish between the two distributions.
Below, we will only focus on the core of the construction in~\cite{CP23}, which corresponds to the parameter regime where
\[
n=k+\ell, \qquad \ell <k^{0.99}, \qquad \tau_2=\tau_0+2^{-4\ell},\qquad \tau_1=\tau_0+2^{-2\ell}-2^{-4\ell},
\]
where $\tau_1\coloneqq 1-2\epsilon_1,\tau_2\coloneqq 1-2\epsilon_2$ are the correlations between $\mathcal{D}_Y,\mathcal{D}_N$ and the closest $k$-junta, and $\tau_0\coloneqq \mathbb{E}_{\bm{a}\sim \{\pm 1\}^{2^\ell}}[|\sum_{i=1}^{2^\ell}\bm{a}_i|/2^{\ell}]\approx \sqrt{2/\pi} \cdot 2^{-\ell/2}$.

The hard distributions $\mathcal{D}_Y$ and $\mathcal{D}_N$ are specified as follows:
\begin{itemize}
        \item $\mathcal{D}_N$: The uniform distribution over all $n$-bit Boolean functions.
        \item $\mathcal{D}_Y$: Sample $\bm{S}\sim \binom{[n]}{k}$.
Then, for each $y \in \{0,1\}^{\bm{S}}$, sample the values $\{\bm{f}(y , z)\}_{z \in \{0,1\}^{\overline{\bm{S}}}}$ uniformly at random from $\{\pm 1\}$, conditioned on the product of these values being $-1$:
\[
\prod_{z \in \{0,1\}^{\overline{\bm{S}}}} \bm{f}(y , z) = -1.
\]
\end{itemize}

A straightforward application of Hoeffding's inequality shows $\adv(\bm{g},\mathcal{J}_k)$ is concentrated around $\tau_0$ over $\bm{g}\sim \mathcal{D}_N$, hence at most $\tau_2$ with high probability.
On the other hand, by a coupling argument,~\cite{CP23} shows that
\[
\mathbb{E}_{\bm{a}}\left[\bigg|\sum_{i=1}^{2^\ell} \bm{a}_i\bigg|/2^{\ell}\right]\ge \tau_0+2^{-2\ell},
\]
where $\bm{a}$ is uniformly sampled from $\{\pm 1\}^{2^\ell}$ conditioned on $\prod_{i=1}^{2^\ell}\bm{a}_i=-1$.
Using the same reasoning, we confirm that $\adv(\bm{f},\mathcal{J}_k)$ is at least $\tau_1$ with high probability over $\bm{f}\sim \mathcal{D}_Y$.\footnote{In~\cite{CP23}, the authors mention they didn't get a tight bound on the gap $\tau_1-\tau_2$, since a loose bound was sufficient for their results.
Indeed, we believe that with some more delicate analysis, the right gap should  $\Theta(2^{-3\ell/2})$ instead of $\Theta(2^{-2\ell})$.
In any case, since we can transform the odd-parity distribution into the uniform distribution by flipping at most one bit, the gap cannot exceed $2^{-\ell}\ll \tau_0$, and such a gap does not suffice for our purpose.}

To show the two distributions are indistinguishable, Chen and Patel~\cite{CP23} observe the following: conditioned on any fixed $\bm{S}=S\in \binom{n}{k}$, an algorithm can distinguish the two distributions only if for some $y\in \{0,1\}^S$, all the points in $\{y , z\}_{z\in \{0,1\}^{\overline{S}}}$ are queried.
In particular, after querying $x^{(1)},\ldots,x^{(m)}$, the algorithm can achieve non-zero advantage only if there exists a pair of indices $1\le i<j\le m$, such that $x^{(i)}=(x^{(j)})^{\oplus {\overline{S}}}$. 
Since the probability that $x^{(i)}=(x^{(j)})^{\oplus \overline{S}}$ for random $\bm{S}\sim \binom{[n]}{k}$ is at most $1/\binom{n}{k}$, by union bound, the algorithm distinguishes two distributions with advantage at most $m^2/\binom{n}{k}$.
Therefore, any tester that distinguishes between the two distributions with constant advantage requires $\Omega\left(\sqrt{\binom{n}{k}}\right)$ queries. 

In the above construction, $\mathbb{E}_{\bm{g}\sim \mathcal{D}_N}[\adv(\bm{g},\mathcal{J}_k)]$ equals $(1-o(1)) \cdot \mathbb{E}_{\bm{f}\sim \mathcal{D}_Y}[\adv(\bm{f},\mathcal{J}_k)]$, whereas we require the former to be quadratic in the latter.
Although~\cite{CP23} extends their construction to a wider class of parameters by padding and planting biased output bits, a straightforward adaptation of their proof does not work in the parameter regime required for our quantum tester. This is because their ``yes'' distribution is too close to the ``no'' distribution.

\paragraph{An Unsuccessful Attempt.}
Let $\mathcal{D}_N$ remain the uniform distribution.
Our objective is to find a ``pseudorandom'' distribution $\mathcal{D}_Y$ with correlation at least $ 2^{-\ell/4}$ to $k$-juntas.
To achieve this, we still first sample $\bm{S}\sim \binom{[n]}{k}$, which is presumed to be the support of the nearest $k$-junta.
Then, for any fixed assignment $y\in \{0,1\}^S$, we aim to ensure that $\mathbb{E}[|\sum_{z\in \{0,1\}^{\overline{S}}} \bm{f}(y , z)|]\ge  2^{3\ell/4}$.
This allows us to apply Hoeffding's inequality to show that $\adv(f,\mathcal{J}_{\bm{S}})=\Omega(2^{-\ell/4})$ holds with high probability.
A natural way to implement this is to first sample a uniform sign $\bm{b}_y\sim \{\pm 1\}$. Then generate $\{\bm{f}(y , z)\}_{z\in \{0,1\}^{\overline{S}}}$ from an independent biased distribution towards $\bm{b}_y$, that is, for each $z\in \{0,1\}^{\overline{S}}$, let $\bm{f}(y , z)=\bm{b}_y$ with probability $1/2+ 2^{-\ell/4}$, and $\bm{f}(y , z)=-\bm{b}_y$ otherwise.

Unfortunately, this strategy is vulnerable to a simple, efficient distinguisher based on estimating influences. While a function $\bm{g} \sim \mathcal{D}_N$ satisfies $\Inf_i[\bm{g}] \ge 1/2-\Theta(2^{-n/2} \cdot \sqrt{\log n})$ for all $i \in [n]$ with high probability, a function $\bm{f} \sim \mathcal{D}_Y$ exhibits significantly lower influence:
for any $i \in \overline{\bm{S}}$, $\Inf_i[\bm{f}]  \le 1/2 -  \Theta(2^{-\ell/2})$ with high probability.
Consequently, an algorithm can distinguish the two distributions by estimating $\Inf_i[\bm{f}]$ for each $i\in [n]$ with precision $O(2^{-\ell/2})$ using $\Theta(2^{\ell})=\Theta(1/{\epsilon})$ samples, and then simply checking if the minimum of the estimations over $i\in [n]$ is below a threshold of $1/2-\Theta(2^{-\ell/2})$.

\paragraph{Our Strategy.}
The fundamental flaw in the previous construction is that it introduces local correlations between the values of $\bm{f}$ at adjacent points.
More generally, $\mathcal{D}_Y$ remains distinguishable from $\mathcal{D}_N$ if the marginal distribution of $\bm{f}$ on any Hamming ball of constant radius deviates from being uniform.
To circumvent this, we propose a construction based on error-correcting codes.
By planting the bias only on the codewords of a high-distance code, we ensure that correlations only emerge between inputs with large Hamming distance; hence, $\bm{f}$ is locally uniform.

Formally, our ``yes'' distribution $\mathcal{D}_Y$ is generated as follows:
\begin{enumerate}[noitemsep]
\item Sample a support $\bm{S} \sim \binom{[n]}{k}$.
\item Let $C \subseteq \{0,1\}^{\overline{S}}$ be an arbitrary error-correcting code with rate $R = 3/4$ and relative distance $\delta = 0.01$,
 as guaranteed by the Gilbert-Varshamov bound.
\item For each assignment $y \in \{0,1\}^S$, sample a uniform sign $\bm{b}_y \sim \{\pm 1\}$.
\item For each $z \in \{0,1\}^{\overline{S}}$, let
\[
\bm{f}(y , z)
\begin{cases} 
\coloneqq \bm{b}_y & \text{if } z \in C, \\
\sim \{\pm 1\} & \text{if } z \notin C.
\end{cases}
\]
\end{enumerate}

Since $\mathbb{E}_{\bm{f}}[|\sum_{z\in \{0,1\}^{\overline{S}}}\bm{f}(y , z)|]\ge |C|-2^{\ell/2}=\Omega(2^{3\ell/4})$,
a standard application of Hoeffding's inequality confirms that $\adv(\bm{f},\mathcal{J}_k)\ge \Omega(2^{-\ell/4})$ happens with high probability.

\paragraph{Indistinguishability.}
Next, we present the main idea behind the indistinguishability of $\mathcal{D}_Y$ and $\mathcal{D}_N$ generated as described above.
Let $\mathcal{A}^f$ be an arbitrary algorithm making $m$ queries to $f$,
and consider any path $\mathcal{P}$ in its decision tree representation.
Let $x^{(1)},x^{(2)},\ldots,x^{(m)}$ denote the set of points queried along $\mathcal{P}$.
 Under the uniform distribution $\bm{g} \sim \mathcal{D}_N$, $\mathcal{A}^{\bm{g}}$ follows $\mathcal{P}$ with probability exactly $2^{-m}$. We show that under $\bm{f} \sim \mathcal{D}_Y$, this probability remains $(1 \pm o(1))2^{-m}$ as long as $m=k^{O(\ell)}$.
The indistinguishability of $\mathcal{D}_Y$ and $\mathcal{D}_N$ follows from a standard coupling argument.
Below, we briefly discuss how to show $\mathcal{A}^{\bm{f}}$ follows $\mathcal{P}$ with probability close to $2^{-m}$ over $\bm{f}\sim \mathcal{D}_Y$.

We say $S\in\binom{[n]}{k}$ is good if for every distinct pair of query points $x^{(i)},x^{(j)}$, at least one of the following holds:
\begin{itemize}[noitemsep]
\item $x^{(i)}$ and $x^{(j)}$ are inconsistent on at least one of the coordinates in $S$,
\item  $x^{(i)}$ and $x^{(j)}$ has Hamming distance less than $\ell/100$.
\end{itemize}
We claim that conditioned on $\bm{S}=S$ for a good $S$, the queried values $\bm{f}(x^{(1)}), \dots, \bm{f}(x^{(m)})$ are independent and uniform.
To see this, fix any $y\in \{0,1\}^S$.
The ``good'' property ensures that the subcube $\{x \in \{0,1\}^n \mid x_S=y\}$ contains at most one queried point $x^{(i)}$ whose projection on $\overline{S}$ is a codeword in $C$,
because if there were two such points, they must agree on $S$ while having a Hamming distance of at least $\ell/100$, which contradicts the assumption that $S$ is good. 
Since each query hits at most one codeword per subcube,
and the signs $(\bm{b}_y)_{y\in \{0,1\}^S}$ associated with the codewords are independent and uniform,
we deduce that $\bm{f}(x^{(1)}),\ldots,\bm{f}(x^{(m)})$ are independent and uniform.
Therefore, conditioned on $\bm{S}$ being good, $\mathcal{A}^{\bm{f}}$ traverses $\mathcal{P}$ with probability exactly $2^{-m}$.

It remains to show that a random $\bm{S}\sim \binom{[n]}{k}$ is good with high probability.
For any two queried points $x^{(i)},x^{(j)}$ with Hamming distance at least $\ell/100$, the probability that they coincide on coordinates in $\bm{S}$ is at most $\binom{n-\ell/100}{k}/\binom{n}{k}=k^{-\Omega(\ell)}$.
By applying a union bound over all $m^2$ pairs of queries, we conclude that $\bm{S}$ is good with probability at least $1-m^2\cdot k^{-\Omega(\ell)}$.


\subsection{Future Directions}

An immediate question left open by our work is whether tolerant junta testing admits a quantum advantage larger than quasi-polynomial.
\begin{question}
    What is the largest possible separation between the classical and quantum query complexity of tolerant junta testing? Can we get an exponential speedup?
\end{question}

Resolving this question would require a more comprehensive understanding of tolerant junta testing in the classical setting, which currently remains out of reach.
Recall that the current best-known classical algorithm~\cite{CPS26} for tolerant junta testing has query complexity $2^{O(k^{1/3})}$, whereas the best-known classical lower bound is only quasi-polynomial~\cite{CP23}.

Another open question is whether we can obtain a quantum advantage for $(k,\eps_1,\eps_2)$-tolerant junta testing, where $\eps_1, \eps_2$ are constants bounded away from $0, 1/2$ and each other. Currently, our quasi-polynomial advantage holds only when $\eps_1$ and $\eps_2$ are $1/\poly(k)$ close to $1/2$ (and each other).

It also remains an interesting open question whether there is an efficient quantum tolerant junta tester in the parameter regime $(1-2\epsilon_1)^2\le (1-2\epsilon_2)$ that is not covered by our results (\Cref{theorem: quantum tester}).
Answering this question negatively would require new super-polynomial quantum query lower bound for tolerant junta testing.

\subsection{Organization}
The remainder of this paper is structured as follows: In \Cref{sec: preliminaries}, we provide the necessary notation and recall the standard tools in Boolean function analysis, coding theory, and concentration inequalities.
In~\Cref{sec: classical lower bound}, we establish our new classical lower bound.
Finally, in \Cref{sec: quantum tester}, we describe our quantum tester and analyze its query complexity and correctness.

\section{Preliminaries} \label{sec: preliminaries}

Throughout the paper, we use $\log$ to denote the base-$2$ logarithm, and $\ln$ to denote the natural logarithm.
For any distribution $\mathcal{D}$ over a finite space $\Omega$, we use $\mathcal{D}(x)$ to denote $\Pr_{\bm{x}\sim \mathcal{D}}[\bm{x}=x]$ for every $x\in \Omega$.
Given two distributions $\mu,\nu$ over the same space, we use $\dTV(\mu,\nu)$ to denote the \emph{total variation distance} between $\mu$ and $\nu$.
Given $S\subseteq [n]$, we use $\overline{S}\coloneqq[n]\setminus S$ to denote the complement of $S$ when $n$ is clear from the context.
Given $x\in \{0,1\}^n$, we use $x^{\oplus S}\coloneqq x\oplus 1^S0^{\overline{S}}$ to denote the string obtained by flipping all the coordinates in $S$ of $x$.

\paragraph{Boolean Functions.}
Every $n$-bit Boolean function $f:\{0,1\}^n\to \{\pm 1\}$ can be uniquely written as a polynomial in the Fourier basis: $f(x)=\sum_{S\subseteq [n]}\hat{f}_S\chi_S(x)$, where $\chi_S(x)\coloneqq (-1)^{\sum_{i\in S} x_i}$, and $\{\hat{f}_S\}_{S\subseteq [n]}$ are the Fourier coefficients of $f$.

Parseval's identity states that the sum of the squares of the Fourier coefficients of any Boolean function is exactly $1$, that is,  $\sum_{S\subseteq [n]} \hat{f}^2_S=1$.
Plancherel's theorem generalizes Parseval's identity.
It states that for any two functions $f,g:\{0,1\}^n\to \mathbb{R}$, $\mathbb{E}_{\bm{x}}[f(\bm{x})g(\bm{x})]=\sum_{S\subseteq [n]} \hat{f}_S\hat{g}_S$.

We recall the standard notion of influence for Boolean functions.
\begin{definition}[Influence]
    Let $f:\{0,1\}^n\to \{\pm 1\}$ be a Boolean function and $i\in [n]$. The influence of $i$ on $f$ is defined as
    \[
    \Inf_{i}[f]\coloneqq \Pr_{\bm{x}\sim \{0,1\}^n}[f(\bm{x})\ne f(\bm{x}^{\oplus \{i\}})]=\sum_{S\ni i} \hat{f}_S^2.
    \]
\end{definition}

In this paper, we also work with the following two variants of influence.

\begin{definition}[Level-$k$ Influence]
Let $f:\{0,1\}^n\to \{\pm 1\}$ be a Boolean function and $k,i\in [n]$. The level-$k$ influence of $i$ on $f$ is defined as
\[
    \Inf^{\le k}_{i}[f]\coloneqq\sum_{S\ni i:|S|\le k} \hat{f}_S^2.
\]
\end{definition}

\begin{definition}[Generalized Influence]
Let $f:\{0,1\}^n\to \{\pm 1\}$ be a Boolean function and $T\subseteq [n]$ be an arbitrary set.
The influence of $T$ on $f$ is defined as
\[
\Inf_T(f)\coloneqq\sum_{S:S\cap T\ne \varnothing} \hat{f}_S^2=1-\sum_{S:S\subseteq \overline{T}} \hat{f}^2_S.
\]
\end{definition}

\paragraph{Juntas.}
For any set $S\subseteq [n]$, we say a Boolean function $f:\{0,1\}^n\to \{\pm 1\}$ is an $S$-junta if it only depends on variables in $S$.
We denote the set of $S$-juntas on $n$ variables by $\mathcal{J}_S^n$.
Moreover, for any integer $k\in [n]$, we say $f$ is a $k$-junta if it depends on at most $k$ variables.
We denote the set of $k$-juntas on $n$ variables by 
$\mathcal{J}_k^n$.
Note that $\mathcal{J}_k^n=\bigcup_{S\subseteq [n]:|S|=k} \mathcal{J}_S^n$.
We will write $\mathcal{J}_S^n$ and $\mathcal{J}_k^n$ simply as $\mathcal{J}_S$ and $\mathcal{J}_k$ when $n$ is clear from the context.

For any two Boolean functions $f,g:\{0,1\}^n\to \{\pm 1\}$, we use $\dist(f,g)\coloneqq \Pr_{\bm{x}}[f(\bm{x})\ne g(\bm{x})]$ to denote the distance between $f$ and $g$.
We also define $\adv(f,g)\coloneqq \mathbb{E}_{\bm{x}}[f(\bm{x})g(\bm{x})]=1-2\dist(f,g)$ to denote the correlation between $f$ and $g$.
Let $\mathcal{F}$ be a set of $n$-bit Boolean functions.
We use $\dist(f,\mathcal{F})\coloneqq \min_{g\in \mathcal{F}} \dist(f,g)$ to denote the distance between $f$ and the closest function in $\mathcal{F}$.
Similarly, we use $\adv(f,\mathcal{F})\coloneqq \max_{g\in \mathcal{F}} \adv(f,g)=1-2\dist(f,\mathcal{F})$ to denote the maximum correlation between $f$ and any function in $\mathcal{F}$.

Given $T\subseteq [n]$, the correlation between $f$ and the closest $T$-junta can be calculated as follows.
\begin{claim}[\cite{ITW21}]\label{claim: correlation with T-juntas}
    Let $f:\{0,1\}^n\to \{\pm 1\}$ be a Boolean function and $T\subseteq [n]$. Define $f_{\avg,T}:\{0,1\}^n\to [-1,1]$ where
    \[
    f_{\avg,T}(x)\coloneqq \mathbb{E}_{\bm{y}\sim \{0,1\}^n}[f(\bm{y})\mid \bm{y}_T=\bm{x}_T].
    \]
    It holds that
    \[
    \adv(f,\mathcal{J}_T)=\mathbb{E}_{\bm{x}\sim \{0,1\}^n}[|f_{\avg,T}(\bm{x})|].
    \]
\end{claim}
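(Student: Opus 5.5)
The plan is to compute the correlation with each $T$-junta directly and then optimize pointwise over the choice of junta. Any $g\in\mathcal{J}_T$ depends only on $x_T$, so we can write $g(x)=h(x_T)$ for some $h:\{0,1\}^T\to\{\pm 1\}$. We split the uniform $\bm{x}$ into its $T$-part and its $\overline{T}$-part, which are independent and uniform. Conditioning on $\bm{x}_T$ then gives
\[
\adv(f,g)=\mathbb{E}_{\bm{x}}\big[h(\bm{x}_T)\,\mathbb{E}[f(\bm{y})\mid \bm{y}_T=\bm{x}_T]\big]=\mathbb{E}_{\bm{x}}\big[g(\bm{x})\,f_{\avg,T}(\bm{x})\big].
\]
This is the tower property of conditional expectation: $g$ is constant on each subcube $\{y: y_T=x_T\}$, so it can be pulled outside the inner average.

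For the upper bound, note that $|g(\bm{x}) f_{\avg,T}(\bm{x})|\le |f_{\avg,T}(\bm{x})|$ pointwise, since $|g|=1$. Hence $\adv(f,g)\le \mathbb{E}_{\bm{x}}[|f_{\avg,T}(\bm{x})|]$ for every $g\in\mathcal{J}_T$.

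For the matching lower bound, we exhibit a junta attaining it. Define $g^*(x)\coloneqq \mathrm{sign}(f_{\avg,T}(x))$, with $\mathrm{sign}(0)\coloneqq 1$. Since $f_{\avg,T}(x)$ depends only on $x_T$, the function $g^*$ is a Boolean $T$-junta, and $g^*(x)f_{\avg,T}(x)=|f_{\avg,T}(x)|$ pointwise. Therefore $\adv(f,g^*)=\mathbb{E}_{\bm{x}}[|f_{\avg,T}(\bm{x})|]$. Taking the maximum over $g\in\mathcal{J}_T$ gives the claimed equality.

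There is no real obstacle here. The only care needed is in the first step: we must check that $f_{\avg,T}$ is itself an $T$-junta, so that $g^*$ lies in $\mathcal{J}_T$, and we must handle ties where $f_{\avg,T}(x)=0$, which is done by the arbitrary choice of $\mathrm{sign}(0)$.
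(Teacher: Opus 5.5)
Your argument is correct and complete. Writing $\adv(f,g)=\mathbb{E}_{\bm{x}}[g(\bm{x})f_{\avg,T}(\bm{x})]$ for every $g\in\mathcal{J}_T$ and then taking $g^*=\mathrm{sign}(f_{\avg,T})$ is the standard proof of this fact; the paper does not prove the claim but cites it from \cite{ITW21}, so there is no in-paper proof to compare against.
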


The following lemma states that $\adv(f,\mathcal{J}_k)$ can be approximated by the maximum correlation between $f$ and $k$-juntas supported only on coordinates with high level-$k$ influence.
\begin{lemma}[\cite{ITW21}]\label{lemma: approximation by high infleunced coordinates}
Let $f:\{0,1\}^n\to \{\pm 1\}$ be any Boolean function and $\tau>0$ be any parameter. Let $C\coloneqq \{i\in [n]\mid \Inf_i^{\le k}[f]\ge \tau^2/k\}$. Then $|C|\le k^2/\tau^2$.
Moreover, 
\[
\max_{S\subseteq C:|S|\le k}\adv(f,\mathcal{J}_S)\ge \adv(f,\mathcal{J}_{k})-\tau.
\]
\end{lemma}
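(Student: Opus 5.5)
We prove the two parts separately. For the size bound we use that total low-degree influence is small. For the correlation bound we take the optimal $k$-set $S$ and replace its coordinates outside $C$ with nothing, arguing through Fourier analysis that little correlation is lost.

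\emph{Size of $C$.} We have
\[
\sum_{i\in[n]}\Inf_i^{\le k}[f]=\sum_{|S|\le k}|S|\hat f_S^2\le k\sum_S \hat f_S^2=k
\]
by Parseval. Every $i\in C$ contributes at least $\tau^2/k$ to this sum, so $|C|\cdot \tau^2/k\le k$, which gives $|C|\le k^2/\tau^2$.

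\emph{Correlation bound.} Let $S$ with $|S|\le k$ attain $\adv(f,\mathcal{J}_k)=\adv(f,\mathcal{J}_S)$, and let $g$ be the optimal $S$-junta, so that $\adv(f,g)=\adv(f,\mathcal{J}_k)$. Set $S'=S\cap C$ and $B=S\setminus C$, so $|B|\le k$. Define $g'(x)\coloneqq \mathbb{E}_{\bm{w}}[g(x_{S'},\bm{w}_B)]$, a real-valued $S'$-junta with values in $[-1,1]$. In Fourier terms, $\hat{g'}_U=\hat g_U$ for $U\subseteq S'$ and $\hat{g'}_U=0$ otherwise. Since $\hat g$ is supported on subsets of $S$, all of size at most $k$, Plancherel and Cauchy--Schwarz give
\[
\adv(f,g)-\mathbb{E}[fg']=\sum_{U\subseteq S,\,U\cap B\ne\varnothing}\hat f_U\hat g_U\le\Big(\sum_{U\subseteq S,\,|U|\le k,\,U\cap B\neq\varnothing}\hat f_U^2\Big)^{1/2}\Big(\sum_U\hat g_U^2\Big)^{1/2}.
\]
The second factor is at most $1$. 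For the first factor, every such $U$ contains some $i\in B$, so the sum is at most $\sum_{i\in B}\Inf_i^{\le k}[f]<|B|\tau^2/k\le\tau^2$. Hence $\mathbb{E}[fg']\ge\adv(f,\mathcal{J}_k)-\tau$.

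\emph{Rounding.} $g'$ is bounded but not Boolean, so we convert it to a Boolean $S'$-junta. We have
\[
\mathbb{E}[fg']=\mathbb{E}[f_{\avg,S'}\,g']\le\mathbb{E}[|f_{\avg,S'}|]=\adv(f,\mathcal{J}_{S'}),
\]
where the equality holds because $g'$ depends only on $x_{S'}$, and the last step is \Cref{claim: correlation with T-juntas}. Since $S'\subseteq C$ and $|S'|\le k$, this proves the lemma.

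The step that needs the most care is ensuring the Cauchy--Schwarz sum involves only level-$\le k$ coefficients, so that the bound uses $\Inf^{\le k}$ rather than full influence. This holds because $g$ is an $S$-junta with $|S|\le k$, so every Fourier coefficient of $g$ sits at a set of size at most $k$.
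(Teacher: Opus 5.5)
Your proof is correct and complete. The paper gives no proof of this lemma (it is cited from \cite{ITW21}), so there is no in-paper argument to compare against. Your route is a standard one: average out the low-influence coordinates $B$ of the optimal junta, charge the loss to $\sum_{i\in B}\Inf_i^{\le k}[f]$ via Cauchy--Schwarz, and then round the bounded $S'$-junta via \Cref{claim: correlation with T-juntas}. You correctly identify the key point, namely that an $S$-junta has all its Fourier mass at level at most $k$.

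Two small remarks. First, when $B=\varnothing$ your strict inequality $\sum_{i\in B}\Inf_i^{\le k}[f]<|B|\tau^2/k$ reads $0<0$. Write $\le$ instead; that is all you need.

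Second, you can skip both the auxiliary function $g'$ and the rounding step by working with $f_{\avg,S}$ and $f_{\avg,S'}$ directly. The proof of \Cref{lemma: advantage equal influence} shows $f_{\avg,T}=\sum_{U\subseteq T}\hat f_U\chi_U$. Combining this with \Cref{claim: correlation with T-juntas}, the reverse triangle inequality and Cauchy--Schwarz gives
\[
\adv(f,\mathcal{J}_S)-\adv(f,\mathcal{J}_{S'})=\mathbb{E}\big[|f_{\avg,S}|\big]-\mathbb{E}\big[|f_{\avg,S'}|\big]\le\mathbb{E}\big[|f_{\avg,S}-f_{\avg,S'}|\big]\le\Big(\sum_{U\subseteq S,\,U\cap B\ne\varnothing}\hat f_U^2\Big)^{1/2}.
\]
This is the same bound you derive, obtained in one line. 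Your version, which goes through the optimal Boolean junta $g$, is equally valid. It needs the extra rounding observation $\mathbb{E}[fg']\le\mathbb{E}[|f_{\avg,S'}|]$, which you handle correctly.
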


\paragraph{Quantum Query Algorithms.}
Let $f:\{0,1\}^n\to \{\pm 1\}$ be a Boolean function with Fourier representation $f(x)=\sum_{S\subseteq [n]} \hat{f}_S\chi_S(x)$.
By Parseval's theorem, one can define a natural distribution $\mathcal{S}_f$ over $2^{[n]}$ where $\mathcal{S}_f(S)=\hat{f}_S^2$. 
The following folklore states that $\mathcal{S}_f$ can be sampled by an algorithm that makes a single quantum query to $f$.
\begin{claim}[\cite{BV97}]
    There is an algorithm $\textsc{FourierSample}^f$ that makes one quantum query to $f:\{0,1\}^n\to \{\pm 1\}$, and for every $S\subseteq [n]$, it outputs $S$ with probability $\hat{f}^2_S$.
\end{claim}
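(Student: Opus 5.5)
The plan is to run the standard Bernstein--Vazirani Fourier-sampling circuit and check that it produces the right output distribution. We convert the $\{\pm 1\}$-valued $f$ into a phase oracle. One quantum query to the standard oracle $O_f\colon |x\rangle|b\rangle \mapsto |x\rangle|b\oplus \tfrac{1-f(x)}{2}\rangle$, with the ancilla prepared in $|-\rangle = (|0\rangle-|1\rangle)/\sqrt2$, acts as $|x\rangle|-\rangle \mapsto f(x)|x\rangle|-\rangle$ (phase kickback). Thus a single query implements the phase oracle $|x\rangle\mapsto f(x)|x\rangle$ on the first register, and the ancilla is left unchanged.

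The algorithm $\textsc{FourierSample}^f$ proceeds as follows.
\begin{enumerate}
\item Start in $|0^n\rangle|-\rangle$ and apply $H^{\otimes n}$ to obtain $2^{-n/2}\sum_{x}|x\rangle|-\rangle$.
\item Make one query, producing $2^{-n/2}\sum_x f(x)|x\rangle|-\rangle$.
\item Apply $H^{\otimes n}$ again.
\item Measure the first register, obtaining $s\in\{0,1\}^n$, and output $S=\{i: s_i=1\}$.
\end{enumerate}

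To verify correctness, use $H^{\otimes n}|x\rangle = 2^{-n/2}\sum_s (-1)^{s\cdot x}|s\rangle = 2^{-n/2}\sum_{S}\chi_S(x)|S\rangle$, identifying $s$ with $S$. After step 3 the first register is
\[
\sum_{S\subseteq[n]}\Bigl(2^{-n}\sum_x f(x)\chi_S(x)\Bigr)|S\rangle=\sum_{S}\mathbb{E}_{\bm{x}}[f(\bm{x})\chi_S(\bm{x})]\,|S\rangle=\sum_S \hat f_S|S\rangle .
\]
The last equality holds because $\hat f_S=\mathbb{E}_{\bm x}[f(\bm x)\chi_S(\bm x)]$, which follows from orthonormality of characters (Plancherel applied with $g=\chi_S$). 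By Parseval this is a unit vector, so measuring it yields $S$ with probability exactly $\hat f_S^2$.

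There is no real obstacle here. The only points needing care are the conversion to a phase oracle (handled by the $|-\rangle$ ancilla, so exactly one query is used) and the sign convention matching $\chi_S(x)=(-1)^{\sum_{i\in S}x_i}$ with the Hadamard phases $(-1)^{s\cdot x}$, and these agree.
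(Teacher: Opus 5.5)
Your proposal is correct: the phase-kickback step, the two Hadamard layers, and the identification of the amplitude $2^{-n}\sum_x f(x)(-1)^{s\cdot x}$ with $\hat{f}_S$ all check out, so one query suffices and the output is $S$ with probability $\hat{f}_S^2$. The paper gives no proof of its own and cites this as folklore from Bernstein--Vazirani; your argument is exactly the standard Fourier-sampling circuit behind that citation.
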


\paragraph{Error-Correcting Codes.}

Given two $n$-bit strings $x,y\in \{0,1\}^n$, let $\Delta(x,y)\coloneqq \{i\mid x_i\ne y_i\}$ denote the Hamming distance between $x$ and $y$.
Moreover, the relative distance between $x$ and $y$ is defined as $\delta(x,y)\coloneqq \Delta(x,y)/n$.

\begin{definition}
    A binary code of block length $n$ is a subset $C\subseteq \{0,1\}^n$.
    The rate of $C$ is defined as $\log |C|/n$, and the relative distance of $C$ is defined as $\min_{c_1\ne c_2\in C} \delta(c_1,c_2)$.
    We say $C$ is an $(n,R,\delta)$-code if it has block length $n$, rate $R$, and relative distance $\delta$.
\end{definition}

The following textbook result states that as long as $R+H_2(\delta)\le 1$, there exists an $(n,R,\delta)$-code,
where $H_2(x)\coloneqq  x\log(1/x)+(1-x)\log(1/(1-x))$ is the binary entropy function, extended continuously to the boundaries so that $H_2(0)=H_2(1)=0$.
\begin{theorem}[Gilbert–Varshamov bound~\cite{Gilbert1952,varshamov1957}]
    Let $\delta \in (0,1/2)$ and $R \in (0,1)$ be parameters such that $R+H_2(\delta) \le 1$.
    Then there exists an $(n,R,\delta)$-code for every $n\in \mathbb{N}$.
\end{theorem}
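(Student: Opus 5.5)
The plan is the classical greedy (maximal code) argument, with the statement read in the standard sense: rate at least $R$ and relative distance at least $\delta$. If exact values are wanted, one can afterwards pass to a subcode of size $\lceil 2^{Rn}\rceil$, which can only increase the minimum distance.

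Fix $n$ and let $d\coloneqq\lceil \delta n\rceil$. Build $C\subseteq\{0,1\}^n$ greedily: start with $C=\varnothing$, and while some $x\in\{0,1\}^n$ has Hamming distance at least $d$ from every current codeword, add $x$ to $C$. By construction, any two distinct codewords are at distance at least $d\ge\delta n$, so the relative distance is at least $\delta$. When the process stops, the Hamming balls of radius $d-1$ around the codewords cover $\{0,1\}^n$. Hence
\[
|C|\cdot \mathrm{Vol}(n,d-1)\ge 2^n,\qquad \mathrm{Vol}(n,r)\coloneqq\sum_{i\le r}\binom{n}{i}.
\]

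The key step is the volume estimate $\mathrm{Vol}(n,r)\le 2^{H_2(\delta)n}$ for every $r\le\delta n$ with $\delta\le 1/2$. Here $d-1<\delta n$, so it applies. The estimate follows from the binomial theorem:
\[
1=\sum_{i=0}^n\binom{n}{i}\delta^i(1-\delta)^{n-i}\ \ge\ \sum_{i\le \delta n}\binom{n}{i}(1-\delta)^n\Big(\tfrac{\delta}{1-\delta}\Big)^{i}\ \ge\ \mathrm{Vol}(n,\delta n)\,\delta^{\delta n}(1-\delta)^{(1-\delta)n}.
\]
The second inequality uses $\delta/(1-\delta)\le 1$ and $i\le\delta n$. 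The right-hand side equals $\mathrm{Vol}(n,\delta n)\,2^{-H_2(\delta)n}$.

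Combining the two displays gives $|C|\ge 2^{(1-H_2(\delta))n}\ge 2^{Rn}$, using the hypothesis $R+H_2(\delta)\le 1$. So the rate is at least $R$.

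I expect no real obstacle. The only points needing care are the off-by-one in the ball radius ($d-1$ rather than $\delta n$), which is resolved above, and the degenerate case where $2^{Rn}<2$. In that case the relative-distance minimum is vacuous or the code can be taken to be any pair of complementary strings, which are at distance $n\ge\delta n$.
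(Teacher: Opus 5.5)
Your proof is correct: it is the standard Gilbert greedy argument (a maximal code whose radius-$(d-1)$ Hamming balls cover the cube, combined with the entropy bound $\mathrm{Vol}(n,\delta n)\le 2^{H_2(\delta)n}$ for $\delta\le 1/2$). This is the textbook proof behind the result, which the paper only cites and does not prove. Your reading of ``rate $R$, relative distance $\delta$'' as ``at least'' is the right one, and it is exactly how the paper uses the code: it needs $|C_{\overline{S}}|\ge 2^{R\ell}$ in the yes-instance correlation bound and pairwise distance $\ge\delta\ell$ in the indistinguishability argument. The aside about passing to a subcode of size $\lceil 2^{Rn}\rceil$ is unnecessary, and it would not give rate exactly $R$ unless $2^{Rn}$ is an integer.
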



We also use $H_2^{-1}:[0,1]\to [0,1/2]$ to denote the inverse binary entropy function, where given $x\in [0,1]$, $H^{-1}_2(x)$ is defined as the unique $y\in [0,1/2]$ such that $H_2(y)=x$.
\begin{fact}[{\cite[Theorem 2.2]{Calabro09}}]\label{fact: bound inverse binary entropy}
For every $x\in [0,1]$, it holds that $H_2^{-1}(x)\ge \frac{x}{2\log (6/x)}$.
\end{fact}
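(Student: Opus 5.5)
The Fact is a purely analytic inequality, so the plan is a direct verification using the monotonicity of $H_2$ on $[0,1/2]$. The case $x=0$ is trivial, since both sides are $0$ with the right-hand side read as its limit. So fix $x\in(0,1]$ and set $y\coloneqq \frac{x}{2\log(6/x)}$. Because $H_2$ is a strictly increasing bijection from $[0,1/2]$ onto $[0,1]$, it suffices to check two things: $y\le 1/2$, and $H_2(y)\le x$. Together these give $y\le H_2^{-1}(x)$.

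\textbf{Step 1: $y\le 1/2$.} For $x\le 1$ we have $\log(6/x)\ge\log 6>2$, so $y\le x/4<1/2$.

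\textbf{Step 2: an upper bound on $H_2$.} For $y\in(0,1/2]$, I will use the elementary bound
\[
(1-y)\log\tfrac{1}{1-y}\le (1-y)\cdot \frac{y}{1-y}\log e = y\log e,
\]
which follows from $\ln\frac{1}{1-y}\le \frac{y}{1-y}$. This gives $H_2(y)\le y\log(e/y)$.

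\textbf{Step 3: reduce $H_2(y)\le x$ to a one-variable inequality.} By Step 2 it suffices that $y\log(e/y)\le x$. Since $y=x/(2\log(6/x))$, this is equivalent to $\log(e/y)\le 2\log(6/x)$, that is,
\[
\frac{2e\log(6/x)}{x}\le \frac{36}{x^2}.
\]
Multiplying through by $x^2$, this becomes $2e\,x\log(6/x)\le 36$.

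\textbf{Step 4: verify the one-variable inequality.} The function $\phi(x)=x\log(6/x)$ has derivative $\log(6/x)-\log e$. This is positive for $x<6/e$, so $\phi$ is increasing on $(0,1]$ and $\phi(x)\le\phi(1)=\log 6<2.6$. Hence $2e\,x\log(6/x)<2\cdot 2.72\cdot 2.6<15\le 36$, which completes the argument.

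There is no real obstacle. The only care needed is choosing the slack-friendly bound on the $(1-y)\log\frac{1}{1-y}$ term in Step 2, and keeping all logarithms in base $2$ consistently.
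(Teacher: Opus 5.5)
Your proof is correct. The paper gives no argument for this fact and simply imports it from \cite[Theorem 2.2]{Calabro09}, so your direct verification is a self-contained substitute rather than a variant of anything in the text.

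The individual steps all check out:
\begin{itemize}
\item The reduction to the two conditions $y\le 1/2$ and $H_2(y)\le x$ is valid, since $H_2$ is monotone on $[0,1/2]$.
\item The estimate $H_2(y)\le y\log(e/y)$ follows from $\ln(1+t)\le t$ with $t=y/(1-y)$.
\item Dividing by $y>0$ correctly turns $y\log(e/y)\le x$ into $2e\,x\log(6/x)\le 36$.
\item The function $x\log(6/x)$ is indeed increasing on $(0,1]$, because $1<6/e$.
\end{itemize}

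The citation buys brevity. Your route buys transparency about the constants. Your final check reads roughly $14\le 36$, so there is ample slack; the same computation in fact goes through with $6$ replaced by $3$. This matters little for the paper, which only uses the qualitative consequence $\delta\cdot\ell\ge\Omega\bigl(\frac{c}{\log(6/c)}\cdot\log(1/\tau_2)\bigr)$.
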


\paragraph{Concentration Bounds.}

We recall the following standard concentration inequalities.
\begin{lemma}[Hoeffding's inequality~\cite{Hoeffding1963}]
Let $\bm{X}_1,\ldots,\bm{X}_n\in [0,1]$ be independent random variables and $\bm{S}=\sum_{i=1}^n \bm{X}_i$.
Then for every $t>0$,
\[
\Pr[|\bm{S}-\mathbb{E}[\bm{S}]|\ge t]\le 2\exp(-2t^2/n).
\]
\end{lemma}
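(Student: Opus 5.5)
This is Hoeffding's inequality for independent random variables in $[0,1]$. I will prove the one-sided bound $\Pr[\bm{S}-\mathbb{E}[\bm{S}]\ge t]\le \exp(-2t^2/n)$ by the Chernoff exponential-moment method. The same argument applied to $1-\bm{X}_i$ (also independent and in $[0,1]$) gives the lower tail, and a union bound over the two tails gives the factor $2$.

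\textbf{Step 1 (Markov on the exponential).} For any $\lambda>0$, Markov's inequality applied to $e^{\lambda(\bm{S}-\mathbb{E}[\bm{S}])}$ gives
\[
\Pr[\bm{S}-\mathbb{E}[\bm{S}]\ge t]\le e^{-\lambda t}\,\mathbb{E}\bigl[e^{\lambda(\bm{S}-\mathbb{E}[\bm{S}])}\bigr]=e^{-\lambda t}\prod_{i=1}^n \mathbb{E}\bigl[e^{\lambda(\bm{X}_i-\mathbb{E}[\bm{X}_i])}\bigr].
\]
The factorization uses independence of the $\bm{X}_i$.

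\textbf{Step 2 (Hoeffding's lemma).} I will show that every random variable $\bm{Y}\in[a,b]$ with $\mathbb{E}[\bm{Y}]=0$ satisfies $\mathbb{E}[e^{\lambda \bm{Y}}]\le e^{\lambda^2(b-a)^2/8}$.
\begin{itemize}
\item By convexity of $y\mapsto e^{\lambda y}$ on $[a,b]$, $e^{\lambda y}\le \frac{b-y}{b-a}e^{\lambda a}+\frac{y-a}{b-a}e^{\lambda b}$.
\item Taking expectations and using $\mathbb{E}[\bm{Y}]=0$ gives $\mathbb{E}[e^{\lambda\bm{Y}}]\le e^{\phi(u)}$, where $u=\lambda(b-a)$, $p=-a/(b-a)$, and $\phi(u)=-pu+\ln(1-p+pe^u)$.
\item One checks $\phi(0)=\phi'(0)=0$.
\item $\phi''(u)=q(1-q)\le 1/4$, where $q=pe^u/(1-p+pe^u)\in[0,1]$.
\item Taylor's theorem then gives $\phi(u)\le u^2/8$.
\end{itemize}
Applied to $\bm{Y}=\bm{X}_i-\mathbb{E}[\bm{X}_i]$, whose range is an interval of length $1$, this yields $\mathbb{E}[e^{\lambda(\bm{X}_i-\mathbb{E}[\bm{X}_i])}]\le e^{\lambda^2/8}$.

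\textbf{Step 3 (optimize).} Combining the two steps, the tail probability is at most $\exp(-\lambda t+n\lambda^2/8)$. Choosing $\lambda=4t/n$ gives $\exp(-2t^2/n)$.

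The only nontrivial step is Hoeffding's lemma in Step 2, specifically the bound $\phi''\le 1/4$. Once $\phi''$ is written as the variance of a Bernoulli variable with parameter $q$, that bound is immediate.
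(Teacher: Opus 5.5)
Your proof is correct. It is the standard Chernoff argument combined with Hoeffding's lemma (the convexity reduction to a two-point distribution, then $\phi''=q(1-q)\le 1/4$ and $\lambda=4t/n$), which is essentially Hoeffding's original 1963 proof. The paper gives no proof of its own; it simply cites this as a standard concentration bound, so there is nothing further to compare.
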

\begin{fact}[\cite{Haagerup1981}]\label{fact: expected abs sum rademacher}
    Let $n$ be even and $\bm{a}_1,\ldots,\bm{a}_n\sim \{\pm 1\}$ be $n$ independent Rademacher variables.
    Then $\mathbb{E}_{\bm{a}}[|\sum_{i=1}^n \bm{a}_i|]\le \sqrt{2n/\pi}$.
\end{fact}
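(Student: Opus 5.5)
The plan is to compute $\mathbb{E}|\bm{S}_n|$ exactly, where $\bm{S}_j \coloneqq \sum_{i\le j}\bm{a}_i$, and then bound the central binomial coefficient by Wallis' product. Write $n=2m$. The target is the closed form
\[
\mathbb{E}|\bm{S}_{2m}| \;=\; 2m\binom{2m}{m}4^{-m}.
\]
With this in hand, the fact reduces to the estimate $c_m\coloneqq \binom{2m}{m}4^{-m}\le 1/\sqrt{\pi m}$. Indeed, that estimate gives $\mathbb{E}|\bm{S}_{2m}|\le 2m/\sqrt{\pi m}=\sqrt{4m/\pi}=\sqrt{2n/\pi}$.

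For the closed form I would use a telescoping argument on the increments of $j\mapsto \mathbb{E}|\bm{S}_j|$.

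\emph{Odd step.} For $j$ even, condition on $\bm{S}_j$ and average over the fresh sign $\bm{a}_{j+1}$. If $\bm{S}_j\neq 0$, then $|\bm{S}_j|\ge 2$, so $\mathbb{E}_{\bm{a}_{j+1}}|\bm{S}_j+\bm{a}_{j+1}|=|\bm{S}_j|$. If $\bm{S}_j=0$, the same average equals $1$. Hence
\[
\mathbb{E}|\bm{S}_{j+1}|=\mathbb{E}|\bm{S}_j|+\Pr[\bm{S}_j=0].
\]

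\emph{Even step.} For $j$ odd, $\bm{S}_j$ is odd and so $|\bm{S}_j|\ge 1$. The same averaging then gives $\mathbb{E}|\bm{S}_{j+1}|=\mathbb{E}|\bm{S}_j|$.

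Telescoping from $\mathbb{E}|\bm{S}_0|=0$ yields
\[
\mathbb{E}|\bm{S}_{2m}|=\sum_{j=0}^{m-1}\Pr[\bm{S}_{2j}=0]=\sum_{j=0}^{m-1}c_j.
\]
The identity $\sum_{j=0}^{m-1}c_j=2m\,c_m$ follows by induction on $m$. The inductive step amounts to $c_m=(2m+2)c_{m+1}-2m\,c_m$, which holds because $c_{m+1}/c_m=(2m+1)/(2m+2)$.

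For the bound $c_m\le 1/\sqrt{\pi m}$, I would show that $d_m\coloneqq m\,c_m^2$ is increasing in $m$. Using the same ratio,
\[
\frac{d_{m+1}}{d_m}=\frac{(2m+1)^2}{(2m+2)^2}\cdot\frac{m+1}{m}=\frac{(2m+1)^2}{4m(m+1)}>1.
\]
Stirling's formula, or equivalently Wallis' product, gives $d_m\to 1/\pi$. Since $d_m$ increases to its limit, $d_m\le 1/\pi$ for all $m\ge 1$, which is the desired bound.

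I expect the only nonroutine point to be spotting the exact formula via the parity-based telescoping. Everything after that is standard, and the monotonicity trick avoids needing any explicit error term in Stirling's formula.
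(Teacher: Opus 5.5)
Your proof is correct. It necessarily takes a different route from the paper, because the paper gives no argument for this fact and simply imports it from \cite{Haagerup1981}.

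You instead compute $\mathbb{E}|\bm{S}_{2m}|$ exactly, in three steps:
\begin{itemize}
\item The parity-based telescoping reduces $\mathbb{E}|\bm{S}_{2m}|$ to $\sum_{j<m}\binom{2j}{j}4^{-j}$.
\item The ratio $c_{m+1}/c_m=(2m+1)/(2m+2)$ collapses this sum to $2m\,c_m$, which is de Moivre's formula for the mean absolute deviation of a symmetric binomial.
\item The monotonicity of $m\,c_m^2$, combined with Wallis' limit, gives $c_m<1/\sqrt{\pi m}$.
\end{itemize}
I checked each step: the conditional averages in the odd and even steps, the induction, and $d_{m+1}/d_m=(2m+1)^2/(4m(m+1))>1$. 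The degenerate case $m=0$ is trivial.

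The citation buys brevity. Your route buys a self-contained, elementary proof, and it also makes the hypotheses transparent:
\begin{itemize}
\item The same telescoping gives $\mathbb{E}|\bm{S}_{2m+1}|=(2m+1)c_m$. Since $(2m+1)c_m^2$ decreases to $2/\pi$, this exceeds $\sqrt{2(2m+1)/\pi}$ for every odd length. So the evenness assumption in the statement is genuinely needed, not an artifact.
\item $\mathbb{E}|\bm{S}_{2m}|/\sqrt{2m}=\sqrt{2m\,c_m^2}$ increases to $\sqrt{2/\pi}$, so the constant $\sqrt{2/\pi}$ is asymptotically sharp.
\end{itemize}
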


We will also use the following simple inequality.
\begin{restatable}{fact}{BinomQuotient}\label{lemma: binom quotient}
    Let $0\le k,m\le n$ be integers such that $m+k\le n$. Then
    \[
    \binom{n-m}{k}\left/\binom{n}{k}\right.\le \left(\frac{n-k}{n}\right)^m.
    \]

\end{restatable}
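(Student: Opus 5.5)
Write the quotient as a telescoping product and bound each factor:
\[
\binom{n-m}{k}\left/\binom{n}{k}\right. = \frac{(n-m)!\,(n-k)!}{(n-m-k)!\,n!} = \prod_{i=0}^{m-1} \frac{n-k-i}{n-i}.
\]
To check this, expand both binomials. The $k!$ factors cancel, leaving $\frac{(n-m)!}{n!}\cdot\frac{(n-k)!}{(n-m-k)!}$. Here $\frac{(n-k)!}{(n-k-m)!}=\prod_{i=0}^{m-1}(n-k-i)$ and $\frac{n!}{(n-m)!}=\prod_{i=0}^{m-1}(n-i)$. The hypothesis $m+k\le n$ guarantees that $n-m-k\ge 0$, so all factorials are defined and every numerator factor $n-k-i$ with $i\le m-1$ is positive.

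Next, bound each factor by $\frac{n-k}{n}$. The inequality $\frac{n-k-i}{n-i}\le\frac{n-k}{n}$ is equivalent, after cross-multiplying positive denominators, to $n(n-k-i)\le (n-k)(n-i)$. This simplifies to $-ni\le -ki$, i.e.\ $ki\le ni$, which holds because $k\le n$ and $i\ge 0$. Multiplying the $m$ nonnegative factors then gives the claim.

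Finally, handle the edge cases. If $m=0$, both sides equal $1$. If $n=0$, then $k=m=0$ and the statement is trivial (read $(0/0)^0$ as $1$). The main, if mild, obstacle is only making sure all denominators $n-i$ are positive. They are whenever $n\ge 1$, since $i\le m-1\le n-1$.
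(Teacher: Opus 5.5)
Your proof is correct and follows the paper's argument: write the quotient as $\prod_{i=0}^{m-1}\frac{n-k-i}{n-i}$ (the paper gets this by setting $\ell=n-k$ and using $\binom{n-m}{k}=\binom{n-m}{\ell-m}$), then bound each factor by $\frac{n-k}{n}$. One small algebra slip: cross-multiplying reduces $n(n-k-i)\le(n-k)(n-i)$ to $0\le ki$, not to $-ni\le -ki$; since $0\le ki$ holds trivially, the conclusion is unaffected.
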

\begin{proof}
Let $\ell \coloneqq n-k$. Then,
\[
\binom{n-m}{k}/\binom{n}{k} = \binom{n-m}{\ell-m}/\binom{n}{\ell} = \prod_{i=0}^{m-1} \frac{\ell-i}{n-i} \le (\ell/n)^{m}.\qedhere
\]
\end{proof}

\section{Classical Lower Bound}\label{sec: classical lower bound}
In this section, we demonstrate our new classical lower bound for tolerant junta testing.
\begin{restatable}{theorem}{ClassicalLowerBound}\label{theorem: classical lower bound}
Let $k\in \mathbb{N}$, $\tau_2\in (2^{-{k}^{0.99}},1/2)$, and $\tau_2<\tau_1=\tau_2^c$ and for some $c\in (\frac{\log\log (1/\tau_2)}{\log (1/\tau_2)},1)$.
Then any adaptive classical algorithm which
 \begin{itemize}
     \item accepts $f$ with probability at least $2/3$ if $f$ has correlation at least $\tau_1$ to some $k$-junta;
   \item rejects $f$ with probability at least $2/3$ if $f$ has correlation at most $\tau_2$ to every $k$-junta
 \end{itemize}
 must make $k^{\Omega(\frac{\log (1/\tau_1)}{\log (1/c)+1})}$ queries to $f$. 
 \end{restatable}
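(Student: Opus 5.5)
Our plan is to follow the overview: use Yao's minimax principle with $\mathcal{D}_N$ uniform over Boolean functions on $n=k+\ell$ variables and $\mathcal{D}_Y$ the planted code distribution. The parameters must be tuned to the general pair $(\tau_1,\tau_2)=(\tau_2^c,\tau_2)$ rather than the fixed rate $3/4$.

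\textbf{Choice of parameters.} Pick $\ell$ with $2^{-\ell/2}\sqrt{\ell}\cdot O(1)\le \tau_2$, i.e.\ $\ell\approx 2\log(1/\tau_2)+O(\log\log(1/\tau_2))$. Then Hoeffding applied to each of the $2^k$ subcube sums, together with a union bound over $\binom{n}{k}=\binom{n}{\ell}\le n^\ell$ choices of $S$ and Fact~\ref{fact: expected abs sum rademacher}, gives $\adv(\bm g,\mathcal{J}_k)\le\tau_2$ with high probability for $\bm g\sim\mathcal{D}_N$. The lower order terms are absorbed using $\tau_2>2^{-k^{0.99}}$, which makes $\ell\ll k$. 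For $\mathcal{D}_Y$ we need $|C|/2^\ell\ge 2\tau_1$, so we take rate $R=\log(1/(2\tau_1))$ measured against $\ell$. Concretely, $R\approx 1-\frac{c\log(1/\tau_2)}{\ell}\approx 1-c/2$. We then choose $\delta=H_2^{-1}(1-R)\ge H_2^{-1}(c/3)\ge \frac{c}{6\log(18/c)}$ by Fact~\ref{fact: bound inverse binary entropy}, and Gilbert–Varshamov supplies the code. The hypothesis $c>\log\log(1/\tau_2)/\log(1/\tau_2)$ should guarantee that the $O(\log\log)$ correction in $\ell$ does not swamp $c\log(1/\tau_2)$, so that $R<1$ with room to spare.

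\textbf{Yes instances.} Fix $S$ and $y$. The subcube sum equals $\bm b_y|C|$ plus a sum of $2^\ell-|C|$ independent signs, so $\mathbb{E}|\cdot|\ge |C|-2^{\ell/2}$. By Claim~\ref{claim: correlation with T-juntas}, $\adv(\bm f,\mathcal{J}_S)$ is the average over $2^k$ independent bounded terms. Hoeffding then gives concentration above $\tau_1$ except with probability $\exp(-\Omega(2^k\tau_1^2))$, which is negligible.

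\textbf{Indistinguishability.} For a decision tree of depth $m$, call $S$ good if every pair of queried points at Hamming distance at least $\delta\ell$ differs somewhere on $S$. Conditioned on a good $S$, each subcube contains at most one queried codeword point. Hence all answers along any path are independent uniform bits, and the path probability is exactly $2^{-m}$. For a fixed pair at distance $d\ge\delta\ell$, agreement on $S$ requires $S\subseteq$ the agreeing coordinates. Fact~\ref{lemma: binom quotient} bounds this probability by $(\ell/n)^{\delta\ell}\le k^{-\Omega(\delta\ell)}$. A union bound over $m^2$ pairs gives $\dTV\le m^2k^{-\Omega(\delta\ell)}+o(1)$.

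A subtle point is that goodness depends on the path, so it must be handled per path. We bound $\Pr_{\mathcal{D}_Y}[\text{path}]\ge 2^{-m}\Pr[\bm S \text{ good for path}]$ and sum the one-sided deficit, a standard argument. Thus $m\ge k^{\Omega(\delta\ell)}$. Since $\delta\ell\gtrsim \frac{c}{\log(1/c)+1}\cdot\log(1/\tau_2)=\frac{\log(1/\tau_1)}{\log(1/c)+1}$, this yields the claimed bound.

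\textbf{Main obstacle.} The main difficulty is the parameter bookkeeping. The rate must stay strictly below $1$ by an amount of order $c$ (giving $\delta\approx c/\log(1/c)$), while $\ell$ is large enough for the no-case concentration and small enough relative to $k$ for $(\ell/n)^{\delta\ell}$ to be $k^{-\Omega(\delta\ell)}$. We would first fix $\ell$ from $\tau_2$, then define $R$ and $\delta$, and only then check that the constraints on $c$ and $\tau_2$ are exactly what make these consistent.
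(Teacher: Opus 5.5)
Your proposal is correct and follows the paper's proof essentially step for step. It uses the same uniform $\mathcal{D}_N$ and planted-code $\mathcal{D}_Y$ on $n=k+\ell$ variables, and the same Haagerup-plus-Hoeffding bounds on both sides via Claim~\ref{claim: correlation with T-juntas}. It also uses the same per-path notion of a good $\bm{S}$, bounded via Fact~\ref{lemma: binom quotient} and a union bound over pairs, and the same $H_2^{-1}$ estimate giving $\delta\ell=\Omega(\log(1/\tau_1)/(\log(1/c)+1))$.

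The only difference is cosmetic: the extra $O(\log\log(1/\tau_2))$ slack in $\ell$ is unnecessary. Hoeffding over the $2^k$ independent subcube terms already fails with probability $\exp(-\Omega(2^{k-\ell}))$, which easily beats the $n^{\ell}$ union bound. So the paper simply takes $\ell=2\log(1/\tau_2)$ and $R=1-c/2+2/\ell$.
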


To prove the theorem, we follow the standard approach for establishing testing lower bound: Using Yao's minimax lemma, it suffices to find two hard distributions $\mathcal{D}_Y$ and $\mathcal{D}_N$---supported mostly on ``yes'' instances and ``no'' instances respectively---and prove that they are indistinguishable by any classical algorithm making a polynomial number of queries.

The rest of the section is organized as follows: In~\Cref{sec: hard distributions}, we define the hard distributions $\mathcal{D}_Y$ and $\mathcal{D}_N$ and verify that they satisfy the required correlation properties.
Then in~\Cref{sec: indistinguishability}, we show the indistinguishability of these two distributions.
Finally, in~\Cref{sec: putting everything together} we put everything together to show~\Cref{theorem: classical lower bound}.
\subsection{Hard Distributions}\label{sec: hard distributions}

Given any $k\in \mathbb{N}$, $\tau_2\in (2^{-{k}^{0.99}},1/2)$, and $\tau_2<\tau_1=\tau_2^c$ and for some $c\in (\frac{\log\log (1/\tau_2)}{\log (1/\tau_2)},1)$,
let $\ell\coloneqq 2\log (1/\tau_2)$.
We will focus only on the special case where $n=k+\ell$.
The proof for arbitrary large $n$ follows from a simple padding argument, namely, adding dummy input bits to $f$.

The two hard distributions $\mathcal{D}_N$ and $\mathcal{D}_Y$ are described as follows.

\begin{definition} \label{def: hard distributions}
Let $R\coloneqq 1-c/2+2/\ell$.
For every $S\in \binom{[n]}{k}$, let $C_{\overline{S}}\subseteq \{0,1\}^{\overline{S}}$ be a fixed arbitrary $(\ell,R, \delta)$-code with $\delta\coloneq H_2^{-1}(1-R)$.
The  distributions $\mathcal{D}_Y$ and $\mathcal{D}_N$ are defined as follows:
\begin{itemize}
        \item $\mathcal{D}_N$: The uniform distribution over all $n$-bit Boolean functions.
        \item $\mathcal{D}_Y$: Draw $\bm{S}\sim \binom{[n]}{k}$.
        For each $y\in \{0,1\}^{\bm{S}}$, draw $\bm{b}_y\sim \{\pm 1\}$.
        Then for each $z\in \{0,1\}^{\overline{\bm{S}}}$, let
        \[
        \bm{f}(y , z)\begin{cases}
            \coloneqq \bm{b}_y & z\in C_{\overline{\bm{S}}}\\
            \sim \{\pm 1\} & z\notin C_{\overline{\bm{S}}}
        \end{cases}.
        \]
\end{itemize}
\end{definition}

We first show $\mathcal{D}_N$ is mostly supported on functions with correlation at most $\tau_2$ with every $k$-junta.
\begin{claim}\label{claim: correlation with no}
Let $\bm{f}\sim \mathcal{D}_N$. With probability at least $1-\exp(-2^{k/3})$, $\adv(\bm{f},\mathcal{J}_k)\le 2^{-\ell/2}=\tau_2$.    
\end{claim}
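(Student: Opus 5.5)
Fix a set $S\in\binom{[n]}{k}$. The plan is to bound $\adv(\bm{f},\mathcal{J}_S)$ with high probability and then take a union bound over all $\binom{n}{k}=\binom{k+\ell}{\ell}$ such sets. By \Cref{claim: correlation with T-juntas},
\[
\adv(\bm{f},\mathcal{J}_S)=\frac{1}{2^k}\sum_{y\in\{0,1\}^S}\bigl|\bm{A}_y\bigr|,\qquad \bm{A}_y\coloneqq 2^{-\ell}\sum_{z\in\{0,1\}^{\overline{S}}}\bm{f}(y,z).
\]
For $\bm{f}\sim\mathcal{D}_N$ the $2^k$ quantities $|\bm{A}_y|$ are independent, since they depend on disjoint blocks of values. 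Each takes values in $[0,1]$.

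First I bound the expectation. By \Cref{fact: expected abs sum rademacher} with $2^\ell$ terms ($\ell\ge 2$, so $2^\ell$ is even),
\[
\mathbb{E}|\bm{A}_y|\le 2^{-\ell}\sqrt{2\cdot 2^\ell/\pi}=\sqrt{2/\pi}\,2^{-\ell/2}.
\]
The slack below $\tau_2=2^{-\ell/2}$ is therefore $t\coloneqq(1-\sqrt{2/\pi})\,2^{-\ell/2}\ge 0.2\cdot 2^{-\ell/2}$.

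Next I apply Hoeffding's inequality to $\sum_y|\bm{A}_y|$, which has $2^k$ independent $[0,1]$-valued terms, with deviation $t\,2^k$. This gives
\[
\Pr\bigl[\adv(\bm{f},\mathcal{J}_S)>\tau_2\bigr]\le 2\exp\bigl(-2t^2 2^k\bigr)\le 2\exp\bigl(-0.08\cdot 2^{k-\ell}\bigr).
\]
Since $\ell=2\log(1/\tau_2)<2k^{0.99}$, we have $2^{k-\ell}\ge 2^{k-2k^{0.99}}$. For large $k$ this is far more than $2^{k/3}$ (constants absorbed); small $k$ can be handled by adjusting constants, or is vacuous.

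Finally I take the union bound over $S$. The number of sets is $\binom{k+\ell}{k}\le (k+\ell)^{\ell}\le 2^{O(k^{0.99}\log k)}$. Multiplying, the failure probability is
\[
2^{O(k^{0.99}\log k)}\cdot 2\exp\bigl(-0.08\cdot 2^{k-2k^{0.99}}\bigr)\le \exp\bigl(-2^{k/3}\bigr).
\]
Because $\mathcal{J}_k=\bigcup_S\mathcal{J}_S$, this yields $\adv(\bm{f},\mathcal{J}_k)\le\tau_2$ with the stated probability.

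The only delicate point is the constant bookkeeping in the exponent. The Hoeffding exponent is scaled by $2^{-\ell}$, which loses a factor up to $2^{2k^{0.99}}$, and I need the remaining $2^{k-2k^{0.99}}$ to dominate both $2^{k/3}$ and the union-bound entropy. This holds for all sufficiently large $k$, which I would state explicitly.
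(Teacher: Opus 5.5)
Your proposal is correct and follows the paper's proof step for step. You write $\adv(\bm{f},\mathcal{J}_S)$ as the average of the $2^k$ independent $[0,1]$-valued quantities $2^{-\ell}\bigl|\sum_{z}\bm{f}(y,z)\bigr|$, bound each mean by $\sqrt{2/\pi}\,2^{-\ell/2}$ via Haagerup's inequality, apply Hoeffding, and take a union bound over the $\binom{n}{k}$ choices of $S$.

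Your explicit tracking of the $2^{-\ell}$ loss in the Hoeffding exponent, which gives $\exp(-\Omega(2^{k-\ell}))$ and uses $\ell<2k^{0.99}$, is more careful than the paper's terse $\exp(-2^{k/2})$ step. That step silently relies on the same ``$k$ sufficiently large'' assumption. One correction: for small $k$ the stated bound is not vacuous. It can genuinely fail (e.g.\ $k=2$, $\ell=3$ is allowed by the hypotheses, and a single fixed $S$ already exceeds $\tau_2$ with probability about $0.26>\exp(-2^{2/3})$). So the claim should be read as holding for sufficiently large $k$, not as something fixed by adjusting constants.
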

\begin{proof}
    Fix any $S\in \binom{[n]}{k}$.
    For any assignment $y\in \{0,1\}^S$ over coordinates in $S$, define $\bm{a}_y\coloneqq 2^{-\ell}\cdot |\sum_{z\in \{0,1\}^{\overline{S}}}\bm{f}(y , z)|$.
    By~\Cref{claim: correlation with T-juntas}, it holds that $\adv(\bm{f},\mathcal{J}_S)=2^{-k}\cdot \sum_{y\in \{0,1\}^S} \bm{a}_y$.
    
    We first bound the expected value of each individual $\bm{a}_y$: Since $\{\bm{f}(y , z)\}_{z\in \{0,1\}^{\overline{S}}}$ are independent Rademacher variables, by~\Cref{fact: expected abs sum rademacher}, we have $\mathbb{E}[\bm{a}_y]\le \sqrt{2/\pi}\cdot2^{-\ell/2}$.
    Then using Hoeffding's inequality, we have $\adv(\bm{f},\mathcal{J}_S)=2^{-k} \cdot\sum_{y\in \{0,1\}^S }\bm{a}_y\ge 2^{-\ell/2}=\tau_2$ with probability $1-\exp(-2^{k/2})$.
    Finally, by applying union bound over all $S\in \binom{[n]}{k}$, we obtain
    \[
    \Pr[\adv(\bm{f},\mathcal{J}_k)\ge \tau_2]\le \sum_{S:|S|=k} \Pr[\adv(\bm{f},\mathcal{J}_S)\ge \tau_2]\le \binom{n}{k}\cdot \exp(-2^{k/2})\le \exp(-2^{k/3}).\qedhere
    \]
\end{proof}

Then we demonstrate that $\mathcal{D}_Y$ is mostly supported on functions with correlation at least $\tau_1$ with some $k$-juntas.
\begin{claim}\label{claim: correlation with yes}
Let $\bm{f}\sim \mathcal{D}_Y$. With probability at least $1-\exp(-2^{k/2})$, $\adv(\bm{f},\mathcal{J}_k)\ge \tau_1$.
\end{claim}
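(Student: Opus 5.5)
Since $\adv(\bm{f},\mathcal{J}_k)\ge \adv(\bm{f},\mathcal{J}_{\bm{S}})$, I only need to lower bound the correlation with $\bm{S}$-juntas, where $\bm{S}$ is the planted set. Fix $\bm{S}=S$ and write $C=C_{\overline{S}}$. By \Cref{claim: correlation with T-juntas},
\[
\adv(\bm{f},\mathcal{J}_S)=2^{-k}\sum_{y\in\{0,1\}^S}\bm{a}_y,\qquad \bm{a}_y\coloneqq 2^{-\ell}\Big|\sum_{z\in\{0,1\}^{\overline S}}\bm{f}(y,z)\Big|.
\]
The variables $\bm{a}_y\in[0,1]$ are independent across $y$, because each depends only on $\bm{b}_y$ and on the fresh random bits in the subcube indexed by $y$. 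So the plan is to lower bound $\mathbb{E}[\bm{a}_y]$ and then apply Hoeffding over the $2^k$ values of $y$.

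For the expectation, I will use the triangle inequality. Write $\sum_z \bm{f}(y,z)=\bm{b}_y|C|+\sum_{z\notin C}\bm{f}(y,z)$. Then
\[
2^{\ell}\,\mathbb{E}[\bm{a}_y]\ge |C|-\mathbb{E}\Big|\sum_{z\notin C}\bm{f}(y,z)\Big|\ge 2^{R\ell}-\sqrt{2^\ell}.
\]
The last step holds because $2^\ell-|C|$ independent Rademacher variables have expected absolute sum at most $\sqrt{2^\ell}$. This follows from \Cref{fact: expected abs sum rademacher} (with a parity adjustment, or Jensen directly), and Jensen in fact gives $\sqrt{2^\ell-|C|}\le 2^{\ell/2}$.

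Now I plug in the parameters. Since $R=1-c/2+2/\ell$, we get $2^{R\ell}=4\cdot 2^{\ell}\tau_2^{c}$, because $2^{-c\ell/2}=\tau_2^c=\tau_1$. Hence
\[
\mathbb{E}[\bm{a}_y]\ge 4\tau_1-\tau_2\ge 3\tau_1 .
\]
I also need to check the rounding: $|C|=2^{R\ell}$ may need floors, and $R<1$ must hold. The latter follows from $c>2\cdot\frac{\log\log(1/\tau_2)}{\log(1/\tau_2)}$-type conditions, which make $2/\ell=1/\log(1/\tau_2)<c/2$. Losing a factor of at most $2$ from rounding still leaves $\mathbb{E}[\bm{a}_y]\ge 2\tau_1$ (roughly).

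Finally, Hoeffding gives
\[
\Pr\Big[2^{-k}\sum_y\bm{a}_y<\tau_1\Big]\le \exp\!\big(-2\cdot 2^{k}\tau_1^2\big).
\]
Since $\tau_1\ge\tau_2>2^{-k^{0.99}}$, we have $2^k\tau_1^2\ge 2^{k-2k^{0.99}}\ge 2^{k/2}$ for large $k$, which yields the failure probability $\exp(-2^{k/2})$. No union bound over $S$ is needed, since we only use the planted $\bm{S}$.

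The main obstacle is purely bookkeeping: making sure the code size really gives $|C|\ge c'\,\tau_1 2^\ell$ with a constant $c'>1$ large enough to absorb the $2^{\ell/2}=\tau_2 2^\ell$ error term and the integrality of $R\ell$. The $+2/\ell$ in $R$ is exactly the slack that provides the factor $4$ for this.
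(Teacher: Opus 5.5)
Your proposal is correct and matches the paper's proof step for step: fix the planted $S$, write $\adv(\bm f,\mathcal{J}_S)$ as the average of the independent $\bm a_y$, lower-bound $\mathbb{E}[\bm a_y]$ by $|C|/2^{\ell}-2^{-\ell/2}=4\tau_1-\tau_2\ge 2\tau_1$, and apply Hoeffding with deviation $2^k\tau_1$; your use of Jensen in place of \Cref{fact: expected abs sum rademacher} also neatly avoids that fact's evenness condition. One small slip in your rounding aside: halving $|C|$ would give $2\tau_1-\tau_2$, not $2\tau_1$, which leaves a Hoeffding margin of only $\tau_1-\tau_2$; but no such loss is needed, because the greedy Gilbert--Varshamov construction directly yields at least $2^{R\ell}$ codewords with minimum distance at least $\delta\ell$.
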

\begin{proof}
    Fix any $S\in \binom{[n]}{k}$. 
    Sample a uniform $\bm{f}\sim \mathcal{D}_Y$ conditioned on $\bm{S}=S$. 
    We similarly define $\bm{a}_y\coloneqq 2^{-\ell}\cdot |\sum_{z\in \{0,1\}^{\overline{S}}}\bm{f}(y , z)|$ for each assignment $y\in \{0,1\}^S$.
    By~\Cref{claim: correlation with T-juntas}, $\adv(\bm{f},\mathcal{J}_S)=2^{-k} \cdot\sum_{y\in \{0,1\}^S }\bm{a}_y$.
    However, for this time, we have
    \[
    \bm{a}_y=2^{-\ell}\cdot \left||C_{\overline{S}}|\cdot \bm{b}_y+\sum_{z\notin C_{\overline{S}}} \bm{f}(y , z)\right|.
    \]
    Using the fact that $\{\bm{f}(y , z)\}_{z\notin C_{\overline{S}}}$ are independent, an application of~\Cref{fact: expected abs sum rademacher} yields
    \[
    \mathbb{E}[\bm{a}_y]\ge |C_{\overline{S}}|/2^{\ell}-2^{-\ell/2}\ge 2^{-(1-R)\cdot \ell-1}.
    \]
Consequently, by Hoeffding's inequality, with probability at least $1-\exp(-2^{k/2})$, we have
\[
\adv(\bm{f},\mathcal{J}_S)=2^{-k}\cdot \sum_{y\in \{0,1\}^S} \bm{a}_y\ge 2^{-(1-R)\cdot\ell-2}= 2^{-c\cdot\ell/2}=\tau_2^c=\tau_1,
\]
where the second equality follows from the definition of $R\coloneqq 1-c/2+2/\ell$.
The desired claim follows by averaging over $\bm{S}\sim \binom{[n]}{k}$.
\end{proof}

\subsection{Indistinguishability}\label{sec: indistinguishability}

Next, we show that $\mathcal{D}_Y$ and $\mathcal{D}_N$ are indistinguishable by classical algorithms making a polynomial number of queries.
\begin{lemma}\label{lemma: indistinguishability}
    Let $\mathcal{D}_Y$ and $\mathcal{D}_N$ be defined as in~\Cref{def: hard distributions}.
    Then for any classical algorithm $\mathcal{A}^f$ that makes $m$ queries to $f$,
    \begin{equation}\label{eqn: indistinguishability}
    \left|\Pr_{\bm{f}\sim \mathcal{D}_Y}[\mathcal{A}^{\bm{f}}=1]-\Pr_{\bm{g}\sim \mathcal{D}_N}[\mathcal{A}^{\bm{g}}=1]\right|\le m^2\cdot k^{-\Omega(\delta\cdot \ell)}.
    \end{equation}
\end{lemma}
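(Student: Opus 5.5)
We may assume $\mathcal{A}$ is deterministic: a randomized algorithm is an average of deterministic ones, and the bound in \eqref{eqn: indistinguishability} is preserved under averaging. A deterministic $\mathcal{A}$ is a depth-$m$ decision tree. For a root-to-leaf path $\mathcal{P}$ with queried points $x^{(1)},\ldots,x^{(m)}$ (all distinct, without loss of generality) and answers $a_1,\ldots,a_m$, we have $\Pr_{\bm{g}\sim\mathcal{D}_N}[\mathcal{A}^{\bm g}\text{ follows }\mathcal{P}]=2^{-m}$.

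The plan is to show that under $\mathcal{D}_Y$ the probability of following $\mathcal{P}$ is at least $(1-\eta)2^{-m}$, where $\eta\coloneqq m^2 k^{-\Omega(\delta\ell)}$. Once this holds, summing the one-sided bound over the accepting leaves gives one direction, and summing over the rejecting leaves gives the other. In fact, $\sum_{\mathcal{P}}\max(0,2^{-m}-\Pr_Y[\mathcal{P}])\le\eta$ bounds the total variation distance between the leaf distributions, which yields \eqref{eqn: indistinguishability}.

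Fix $\mathcal{P}$. Call $S\in\binom{[n]}{k}$ \emph{good} if every pair $i<j$ with $x^{(i)}_S=x^{(j)}_S$ has $\Delta(x^{(i)},x^{(j)})<\delta\ell$. We condition on $\bm S=S$ for a good $S$. Within each subcube $\{x: x_S=y\}$, any two queried points whose $\overline S$-parts both lie in $C_{\overline S}$ are distinct codewords, so they are at distance at least $\delta\ell$. Goodness forbids this, so each subcube contains at most one queried codeword point. The values at queried non-codeword points are then fresh independent uniform bits. The value at the unique queried codeword point in subcube $y$ (if any) is $\bm b_y$, and these signs are independent across $y$ and independent of the other bits. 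The answers to $x^{(1)},\ldots,x^{(m)}$, which are fixed points once the path is fixed, are therefore i.i.d.\ uniform. Hence $\Pr[\text{follow }\mathcal{P}\mid\bm S=S]=2^{-m}$ exactly. It is important here that the query points along $\mathcal{P}$ are determined by the path and not by $\bm f$, so conditioning on the path involves no adaptivity issue. This gives $\Pr_Y[\mathcal{P}]\ge 2^{-m}\Pr[\bm S\text{ good}]$.

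It remains to bound $\Pr[\bm S\text{ bad}]$, which I expect to be the main (though mild) computational step. Take a pair $i,j$ with $d\coloneqq\Delta(x^{(i)},x^{(j)})\ge\delta\ell$. The event $x^{(i)}_{\bm S}=x^{(j)}_{\bm S}$ requires $\bm S$ to avoid the $d$ differing coordinates. By \Cref{lemma: binom quotient} this has probability at most $\binom{n-d}{k}/\binom{n}{k}\le(\ell/(k+\ell))^{d}$. Since $\ell=2\log(1/\tau_2)<2k^{0.99}$, we have $\ell/(k+\ell)\le k^{-\Omega(1)}$, so the probability is $k^{-\Omega(d)}\le k^{-\Omega(\delta\ell)}$. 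If $d>\ell$ the probability is zero, which is consistent with the bound. A union bound over at most $m^2$ pairs gives $\Pr[\bm S\text{ bad}]\le m^2k^{-\Omega(\delta\ell)}$, completing the proof. One should check that $\delta\ell\ge1$ makes the bound meaningful; if $\delta\ell<1$ the statement is trivial anyway.
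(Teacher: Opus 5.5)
Your proposal is correct and follows essentially the same route as the paper: the same notion of a good $S$, the exact $2^{-m}$ path probability conditioned on a good $S$ (at most one queried codeword per subcube), the bound $\binom{n-d}{k}/\binom{n}{k}\le(\ell/(k+\ell))^d$ followed by a union bound over pairs, and the one-sided total variation argument. You also make explicit several details the paper leaves implicit and which the argument does rely on: that $\mathcal{A}$ is deterministic, that the queried points along a path are distinct, and that the case $d>\ell$ gives probability zero.
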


\begin{proof}[Proof of~\Cref{lemma: indistinguishability}]
    Consider $\mathcal{A}^f$ as a depth-$m$ decision tree.
    Without loss of generality, we can assume that every root-to-leaf path in the tree has length exactly $m$.
    Let $\mathcal{P}$ denote an arbitrary path.
    It is straightforward to observe that $\mathcal{A}^{\bm{f}}$ follows path $\mathcal{P}$ with probability exactly $2^{-m}$ when $\bm{f}$ is drawn from the uniform distribution $\mathcal{D}_N$.
    Our primary task is to show that when $\bm{f} \sim \mathcal{D}_Y$, $\mathcal{A}^{\bm{f}}$ follows $\mathcal{P}$ with probability close to $2^{-m}$.
    
    Let $x^{(1)},\ldots,x^{(m)}$ denote the points queried on $\mathcal{P}$.
    We say $S\in \binom{[n]}{k}$ is \emph{good} (w.r.t. $\mathcal{P}$) if for every $1\le i<j\le n$, either $x^{(i)}_S\ne x^{(j)}_S$, or $\Delta(x^{(i)},x^{(j)})<\delta\cdot \ell$.
    We first show that when $S$ is good, the probability that $A^{\bm{f}}$ follows $\mathcal{P}$ is exactly $2^{-m}$ over $\bm{f}\sim\mathcal{D}_Y$ conditioned on $\bm{S}=S$.
        \begin{claim}\label{claim: uniform when S is good}
        Let $r^{(1)},\ldots,r^{(m)}\in \{\pm 1\}$ be $m$ arbitrary signs. Suppose $S$ is good, then
        \[
        \Pr_{\bm{f}\sim\mathcal{D}_Y\mid \bm{S}=S}[\bm{f}(x^{(i)})=r^{(i)},\forall i\in [m]]=2^{-m}.
        \]
        In particular, $\mathcal{A}^{\bm{f}}$ follows $\mathcal{P}$ with probability exactly $2^{-m}$ over $\bm{f}\sim \mathcal{D}_Y$ conditioned on $\bm{S}=S$.
        \end{claim}
        \begin{proof}
            For every $i\in [m]$, we can write $x^{(i)}=(y^{(i)}, z^{(i)})$ where $y^{(i)}\in \{0,1\}^S,z^{(i)}\in \{0,1\}^{\overline{S}}$.
            Since $S$ is good, for any pair of distinct indices $i,j$ such that $y^{(i)}=y^{(j)}$, we must have $\Delta(z^{(i)},z^{(j)})<\delta\cdot \ell$.
            This implies either $z^{(i)}\notin C_{\overline{S}}$, or $z^{(j)}\notin C_{\overline{S}}$.
            Moreover, since the signs $(\bm{b}_y)_{y\in \{0,1\}^S}$ associated with the codewords are independent and uniform,
            we deduce that $\bm{f}(y^{(1)}, z^{(1)}),\ldots,\bm{f}(y^{(m)}, z^{(m)})$ are independent uniform bits,
            from which the desired statement follows.
        \end{proof}
        Next, we show $\bm{S}$ is good with high probability over $\bm{S}\sim \binom {[n]}{k}$.
        \begin{claim}\label{claim: most of S are good}
        With probability $1-m^2\cdot n^{-\Omega(\delta \cdot \ell)}$ over $\bm{S}\sim \binom{[n]}{k}$, $\bm{S}$ is good.
        \end{claim}
        \begin{proof}
            For any fixed $1\le i<j\le m$, let $T\coloneqq \{r\in [n]\mid x^{(i)}_r\ne x^{(j)}_r\}$.
            Observe that whenever $|T|\ge \delta\cdot \ell$,
            \[
            \Pr_{\bm{S}\sim \binom{n}{k}}[\bm{S}\cap T=\emptyset]\le \binom{n-\delta\cdot \ell}{k}\left/\binom{n}{k}\right.\le (\ell/n)^{\delta\cdot \ell}\le k^{0.01\delta\cdot \ell},
            \]
            where the second inequality follows from~\Cref{lemma: binom quotient}, and the last inequality follows from the assumption that $\ell\le k^{0.99}$.
            Therefore, 
            \[
            \Pr_{\bm{S}\sim \binom{n}{k}}[x^{(i)}_{\bm{S}}= x^{(j)}_{\bm{S}}\land \Delta(x^{(i)},x^{(j)})\ge\delta\cdot \ell]\le n^{-\Omega(\delta \cdot \ell)}.
            \]
            Finally, by union bound over all pairs $(i,j)$,
            \[
            \Pr_{\bm{S}}[\bm{S} \text{ is not good}]\le \sum_{1\le i<j\le m} \Pr_{\bm{S}}[x^{(i)}_{\bm{S}}= x^{(j)}_{\bm{S}}\land \Delta(x^{(i)},x^{(j)})\ge \delta\cdot \ell]\le  m^2\cdot k^{-\Omega(\delta\cdot \ell)}.\qedhere 
            \]
        \end{proof}
    With the above two claims in hand, we are ready to show~\eqref{eqn: indistinguishability}.
    Let $\mu_Y,\mu_N$ denote the distributions of the path which $\mathcal{A}^{\bm{f}}$ follows for $\bm{f}\sim \mathcal{D}_Y$ and $\bm{f}\sim\mathcal{D}_N$ respectively.
    Claims \ref{claim: uniform when S is good} and \ref{claim: most of S are good} together imply that for every path $\mathcal{P}$, $\mu_Y(\mathcal{P})\ge (1-m^2\cdot k^{-\Omega(\delta\cdot \ell)})\cdot 2^{-m}$.
    Hence, the total variation distance between $\mu_Y$ and $\mu_N$ is
    \[
    \dTV(\mu_Y,\mu_N)= \sum_{\mathcal{P}} \max(0,\mu_N(\mathcal{P})-\mu_Y(\mathcal{P}))\le \sum_{\mathcal{P}} \mu_N(\mathcal{P})\cdot (m^2\cdot k^{-\Omega(\delta\cdot \ell)})=m^2\cdot k^{-\Omega(\delta\cdot \ell)}.
    \]
It follows that
\[
\left|\Pr_{\bm{f}\sim \mathcal{D}_Y}[\mathcal{A}^{\bm{f}}=1]-\Pr_{\bm{g}\sim \mathcal{D}_N}[\mathcal{A}^{\bm{g}}=1]\right|\le \dTV(\mu_Y,\mu_N)\le m^2\cdot k^{-\Omega(\delta\cdot \ell)}.\qedhere
\]
\end{proof}


\subsection{Putting Everything Together}\label{sec: putting everything together}
We are now ready to prove \Cref{theorem: classical lower bound}.
\begin{proof}[Proof of~\Cref{theorem: classical lower bound}]
Let $\mathcal{D}_Y$ and $\mathcal{D}_N$ be defined as in~\Cref{def: hard distributions}.
Let $\mathcal{D}'_Y$ denote the distribution of $\bm{f}\sim \mathcal{D}_Y$ conditioned on $\adv(\bm{f},\mathcal{J}_k)\ge \tau_1$, and $\mathcal{D}'_N$ denote the distribution of $\bm{g}\sim \mathcal{D}'_N$ conditioned on $\adv(\bm{g},\mathcal{J}_k)\le \tau_2$.
Note that $\mathcal{D}'_Y$ is fully supported on ``yes'' instances, and $\mathcal{D}'_N$ is fully supported on no instances.
Moreover, \Cref{claim: correlation with yes} implies that $\dTV(\mathcal{D}_Y,\mathcal{D}'_Y)\le \exp(-2^{k/2})$, and similarly, \Cref{claim: correlation with no} implies that $\dTV(\mathcal{D}_N,\mathcal{D}'_N)\le \exp(-2^{k/3})$.
Thus, for any $m$-query algorithm $\mathcal{A}^f$, by~\Cref{lemma: indistinguishability}, we have
\begin{align*}
&\left|\Pr_{\bm{f}\sim \mathcal{D}'_Y}[\mathcal{A}^{\bm{f}}=1]-\Pr_{\bm{g}\sim \mathcal{D}'_N}[\mathcal{A}^{\bm{g}}=1]\right|\\
\le& \dTV(\mathcal{D}'_Y,\mathcal{D}_Y)+\left|\Pr_{\bm{f}\sim \mathcal{D}_Y}[\mathcal{A}^{\bm{f}}=1]-\Pr_{\bm{g}\sim \mathcal{D}_N}[\mathcal{A}^{\bm{g}}=1]\right|+\dTV(\mathcal{D}_N,\mathcal{D}'_N)\\
\le& 2m^2\cdot k^{-\Omega(\delta\cdot \ell)},\\
\le& 2m^2\cdot k^{-\Omega(\frac{\log (1/\tau_1)}{\log(6/c)})},
\end{align*}
where the last inequality follows because
\[
\delta\cdot\ell=H_2^{-1}(c/2-2/\ell)\cdot \ell\ge \Omega\!\left(\frac{c}{\log (6/c)}\cdot \log(1/\tau_2)\right)= \Omega\!\left(\frac{\log (1/\tau_1)}{\log(1/c)+1}\right).
\]

Let $\mathcal{A}^f$ be any algorithm that satisfies the conditions in the theorem statement.
It distinguishes $D'_Y$ from $D'_N$ with advantage at least $1/3$.
Therefore, it must make $k^{\Omega(\frac{\log (1/\tau_1)}{\log(1/c)+1})}$ queries.
\end{proof}

\section{Quantum Tolerant Junta Tester} \label{sec: quantum tester}

In this section, we introduce the quantum tolerant junta tester.

\begin{restatable}{theorem}{QuantumTester}\label{theorem: quantum tester}
    Let $k\in \mathbb{N}$ and $0<\tau_2<\tau_1^2<1$ be parameters. 
    Then there exists a non-adaptive quantum algorithm with query complexity $\poly(k,1/\tau)$ which
    \begin{itemize}
        \item accepts $f$ with probability at least $2/3$ if $f$ has correlation at least $\tau_1$ to some $k$-junta;
    \item rejects $f$ with probability at least $2/3$ if $f$ has correlation at most $\tau_2$ to every $k$-junta.
\end{itemize}
\end{restatable}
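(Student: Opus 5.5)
The plan is to reduce the tolerant correlation test to a test on a Fourier-weight quantity, which can be estimated from Fourier samples. For a set $T\subseteq[n]$ write $W_T(f)\coloneqq\sum_{S\subseteq T}\hat f_S^2=1-\Inf_{\overline T}(f)$. The first step is the two-sided characterization
\[
W_T(f)\le \adv(f,\mathcal{J}_T)\le \sqrt{W_T(f)}.
\]
Both bounds follow from \Cref{claim: correlation with T-juntas}. Since $f_{\avg,T}=\sum_{S\subseteq T}\hat f_S\chi_S$, Cauchy--Schwarz gives $\mathbb{E}|f_{\avg,T}|\le\sqrt{\mathbb{E}f_{\avg,T}^2}=\sqrt{W_T(f)}$. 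In the other direction, $|f_{\avg,T}|\le 1$ gives $\mathbb{E}|f_{\avg,T}|\ge\mathbb{E}f_{\avg,T}^2=W_T(f)$. Set $W^*(f)\coloneqq\max_{|T|\le k}W_T(f)$. In the yes case $\adv(f,\mathcal{J}_k)\ge\tau_1$ forces $W^*\ge\tau_1^2$. In the no case $\adv(f,\mathcal{J}_k)\le\tau_2$ forces $W^*\le\tau_2$. Since $\tau_2<\tau_1^2$, there is a gap $\gamma\coloneqq\tau_1^2-\tau_2>0$, and the query bound will be $\poly(k,1/\gamma)$; this is what the $1/\tau$ in the statement should mean.

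The second step is to estimate $W^*$ to accuracy $\gamma/3$ with $\poly(k,1/\gamma)$ non-adaptive quantum queries. Draw $N$ independent samples $\bm S_1,\dots,\bm S_N$ using $\textsc{FourierSample}^f$. For each $T$, the estimator $\hat W_T\coloneqq\frac1N\#\{j:\bm S_j\subseteq T\}$ is unbiased for $W_T(f)$. The tester accepts iff $\max_{|T|\le k}\hat W_T\ge(\tau_1^2+\tau_2)/2$. The difficulty is that a union bound over all $\binom nk$ sets $T$ would introduce a dependence on $n$.

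This union bound is the main obstacle, and I would remove it by restricting to relevant coordinates, following \Cref{lemma: approximation by high infleunced coordinates} and \cite{quantumtester}. Only sets $\bm S_j$ with $|\bm S_j|\le k$ can lie inside some $T$, so each $\hat W_T$ depends only on those samples. Let $\Gamma$ be the set of coordinates appearing in at least $\gamma N/(10k)$ of the samples of size $\le k$. Then $|\Gamma|\le 10k^2/\gamma$.

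Next, the true optimum is nearly achieved inside a small set. Let $C=\{i:\Inf^{\le k}_i[f]\ge\gamma^2/k^3\}$, which has size $\poly(k/\gamma)$. For any $T$ with $|T|\le k$, removing the coordinates of $T\setminus C$ loses weight at most $\sum_{i\in T\setminus C}\Inf_i^{\le k}\le\gamma^2/k^2$. So $\max_{T\subseteq C,|T|\le k}W_T\ge W^*-\gamma/10$. By Hoeffding's inequality and a union bound over the $\poly(k/\gamma)$ coordinates of $C$, with $N=\poly(k/\gamma)$ every $i\in C$ with level-$k$ influence well above the threshold lands in $\Gamma$. The near-threshold coordinates can be handled by choosing the two thresholds with a factor-$2$ slack.

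It remains to control estimation error in both directions. For the lower side, restricting to $T\subseteq\Gamma$ leaves only $\binom{|\Gamma|}{\le k}=(k/\gamma)^{O(k)}$ candidates. A union bound then costs only a factor $O(k\log(k/\gamma))$ in $N$, giving accuracy $\gamma/10$ for all of them; $\Gamma$ is data-dependent, and I would handle this by conditioning on $\Gamma$, or by splitting the samples into two halves. For the upper side, $\hat W_T$ can only be larger than the corresponding quantity restricted to $\Gamma$ by the samples that touch coordinates outside $\Gamma$. Each such coordinate has at most $\gamma N/(10k)$ hits, and $T$ has at most $k$ of them, so these samples contribute at most $\gamma/10$. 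Hence $\max_T\hat W_T$ is within $\gamma/3$ of $W^*$ with probability at least $2/3$, which yields the theorem with total query complexity $N=\poly(k,1/\gamma)$.
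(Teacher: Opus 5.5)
Your plan is the paper's in substance. The two-sided bound $W_T(f)\le\adv(f,\mathcal{J}_T)\le\sqrt{W_T(f)}$ is \Cref{lemma: advantage equal influence}. Estimating $\max_{|T|\le k}W_T$ from Fourier samples, while avoiding the $\binom{n}{k}$ union bound by passing to coordinates of large empirical level-$k$ influence, is \Cref{alg: quantum tester}, with $\widetilde{\bm C}$ playing the role of your $\Gamma$. Maximizing over all $T$ (using $\Gamma$ only in the analysis) and proving the restriction step directly for $W_T$ instead of via \Cref{lemma: approximation by high infleunced coordinates} are both fine. However, two steps of the estimation argument fail as written.

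First, the thresholds are in the wrong order. For $k\ge 4$, the cutoff $\gamma^2/k^3$ defining $C$ lies below the cutoff $\gamma/(10k)$ defining $\Gamma$. Coordinates of $C$ whose $\Inf_i^{\le k}[f]$ lies between the two then typically miss $\Gamma$, so $C\subseteq\Gamma$ fails. A factor-$2$ slack needs $C$'s cutoff near $\gamma/(5k)$, and then your loss bound degrades from $\gamma^2/k^2$ to about $\gamma/5$. That still suffices after retuning constants; the paper uses $\tau^2/k$ for $C$ versus $\tau^2/(2k)$ for $\widetilde{\bm C}$. For your tester, though, the lower side needs none of this. If $T^*$ is a fixed maximizer of $W_T$, one Hoeffding bound gives $\max_T\hat W_T\ge\hat W_{T^*}\ge W^*-\gamma/10$.

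Second, and more substantively, the upper side. Your charging step $\hat W_T<\hat W_{T\cap\Gamma}+\gamma/10$ is correct and deterministic. But you then need $\hat W_{T'}\le W_{T'}+\gamma/10$ simultaneously for all $T'\subseteq\Gamma$, where $\Gamma$ is computed from the same samples, and neither proposed remedy gives this.
\begin{itemize}
\item Conditioning on $\Gamma$ leaves samples that are no longer i.i.d.\ from $\mathcal{S}_f$. Moreover, the event $T'\subseteq\Gamma$ is favoured precisely by samples landing inside $T'$, which is the overestimation you want to rule out.
\item Splitting the samples breaks the charging step, which needs coordinates outside $\Gamma$ to be rare in the very samples defining $\hat W_T$. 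You cannot union-bound over the up to $n$ coordinates outside $\Gamma$. Splitting works only if you also restrict the tester's maximum to $T\subseteq\Gamma$, and then the lower side genuinely needs the corrected $C\subseteq\Gamma$.
\end{itemize}

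A fix that keeps your tester: show that with high probability $\Gamma\subseteq D\coloneqq\{i:\Inf_i^{\le k}[f]\ge\gamma/(100k)\}$, a deterministic set of size at most $100k^2/\gamma$. To do this, let $\iota_i\coloneqq\Inf_i^{\le k}[f]$ and $t=\gamma N/(10k)$. Sum the tail bounds $\Pr[\mathrm{Bin}(N,\iota_i)\ge t]\le(eN\iota_i/t)^t$ over all $i\notin D$, using $\sum_i\iota_i\le k$; a plain union bound would not suffice. Then union-bound over $T'\subseteq D$.

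Simpler still: the class $\{S\mapsto\mathbf{1}[S\subseteq T]\}_{|T|\le k}$ has VC dimension at most $k$, since shattered sets each need a private element inside one common $T$. Uniform convergence then gives $\sup_{|T|\le k}|\hat W_T-W_T|\le\gamma/10$ from $N=O(k\log(1/\gamma)/\gamma^2)$ samples, and $\Gamma$ becomes unnecessary.

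The paper's own union bound over $R\subseteq\widetilde{\bm C}$ has the same data dependence and does not address it. So an argument of this kind is needed on either route.
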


\begin{algorithm}[htbp] 
    \caption{Quantum tolerant junta tester } \label{alg: quantum tester}
    \begin{algorithmic}[1]
        \Require{Quantum oracle access to $f$, an integer $k$, and parameters $\epsilon_1,\epsilon_2$ with $(1-2\epsilon_1)^2>1-2\epsilon_2$ }
        \Ensure{``Yes'' or ``No''}
        \State{$\epsilon\gets ((1-2\epsilon_1)^2-(1-2\epsilon_2))/2$}
        \State $\tau\gets \epsilon/3$
        \State{$M\gets \frac{100k\log k\cdot\log (1/\tau)}{\tau^2}$}
        \State{Run $\textsc{FourierSample}^f$ $M$ times to obtain $\bm{S}^{(1)},\ldots,\bm{S}^{(M)}\sim \mathcal{S}_f$.}\label{line: sample}
        \State{$\bm{p}_j\gets 0$ for all $j\in [n]$}\label{line: block1 start}
        \For{$i\in [M]$}
            \If{$|\bm{S}^{(i)}|\le k$}
            \State $\bm{p}_j\gets \bm{p}_{j}+1/M$ for all $j\in \bm{S}^{(i)}$
            \EndIf
        \EndFor
        \State $\widetilde{\bm{C}}\gets \{j\in [n]:\bm{p}_j\ge \frac{\tau^2}{2k}\}$\label{line: block1 end}
        \State{$\bm{q}_R=0$ for all $R\subseteq \widetilde{\bm{C}}$ with $|R|\le k$}\label{line: block2 start}
        \For{$i\in [M]$}
            \State $\bm{q}_R\gets \bm{q}_R+1/M$ for all $\bm{S}^{(i)}\subseteq R\subseteq \widetilde{\bm{C}}$ with $|R|\le k$
        \EndFor\label{line: block2 end}
        \State{$\widetilde{\bm{\adv}}\gets \max_{R\subseteq \widetilde{\bm{C}}:|R|\le k} \bm{q}_R$}\label{line: approx adv}
        \If{$\widetilde{\bm{\adv}}>((1-2\epsilon_1)^2+(1-2\epsilon_2))/2$} 
            \State{\Return ``Yes''}
        \Else 
            \State{\Return ``No''}
        \EndIf
    \end{algorithmic}
\end{algorithm}

The quantum tolerant junta tester described in~\Cref{theorem: quantum tester} is implemented by~\Cref{alg: quantum tester}.
It is essentially an adaptation of the tester from~\cite{quantumtester}, tailored to the following refined trade-off between advantage and influence:

\begin{restatable}{lemma}{AdvantageEqualInfluence}   
\label{lemma: advantage equal influence}
Let $f:\{0,1\}^n \to \{\pm 1\}$ be a Boolean function and $T\subseteq[n]$ be an arbitrary set.
Then
\[
1-\Inf_{\overline{T}}(f)\le \adv(f,\mathcal{J}_T)\le \sqrt{1-\Inf_{\overline{T}}(f)}.
\]
\end{restatable}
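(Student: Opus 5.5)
The plan is to express both quantities through the averaged function $f_{\avg,T}$ from \Cref{claim: correlation with T-juntas}. First I will compute its Fourier expansion. Averaging $f$ over the coordinates in $\overline{T}$ kills every character $\chi_S$ with $S\not\subseteq T$ and leaves every $\chi_S$ with $S\subseteq T$ unchanged. Hence
\[
f_{\avg,T}=\sum_{S\subseteq T}\hat{f}_S\chi_S ,
\]
and by Parseval,
\[
\mathbb{E}_{\bm{x}}[f_{\avg,T}(\bm{x})^2]=\sum_{S\subseteq T}\hat{f}_S^2=1-\Inf_{\overline{T}}(f),
\]
using the definition of generalized influence with $\overline{T}$ in place of $T$. \Cref{claim: correlation with T-juntas} also gives $\adv(f,\mathcal{J}_T)=\mathbb{E}[|f_{\avg,T}(\bm{x})|]$. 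So the lemma reduces to comparing the first absolute moment and the second moment of a function $g\coloneqq f_{\avg,T}$ taking values in $[-1,1]$.

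For the upper bound I will apply Cauchy--Schwarz (equivalently Jensen):
\[
\mathbb{E}[|g|]\le\sqrt{\mathbb{E}[g^2]}=\sqrt{1-\Inf_{\overline{T}}(f)} .
\]
For the lower bound I will use that $|g(x)|\le 1$ pointwise, since $g$ is an average of $\pm1$ values. This gives $g(x)^2\le |g(x)|$, and taking expectations yields
\[
\mathbb{E}[|g|]\ge \mathbb{E}[g^2]=1-\Inf_{\overline{T}}(f).
\]
If one prefers a route that avoids \Cref{claim: correlation with T-juntas}, the lower bound can also be seen directly. Let $h=\mathrm{sgn}(g)$, which is a $T$-junta. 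Then $\adv(f,h)=\mathbb{E}[f h]=\mathbb{E}[g h]=\mathbb{E}[|g|]$, because $h$ depends only on $T$ and so $\mathbb{E}[fh]=\mathbb{E}[f_{\avg,T}h]$.

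There is no real obstacle here. The only step needing care is justifying that averaging over $\overline{T}$ is exactly the Fourier projection onto the sets $S\subseteq T$. This follows because $\mathbb{E}_{\bm{z}}[\chi_S(y,\bm{z})]$ is $0$ when $S\cap\overline{T}\neq\varnothing$ and equals $\chi_S(y)$ otherwise.
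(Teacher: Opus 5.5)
Your proposal is correct and follows essentially the same route as the paper. The paper also writes $f_{\avg,T}=\sum_{S\subseteq T}\hat{f}_S\chi_S$, applies Parseval to get $\mathbb{E}[f_{\avg,T}^2]=1-\Inf_{\overline{T}}(f)$, and then sandwiches $\mathbb{E}[|f_{\avg,T}|]$ between that second moment (using $|f_{\avg,T}|\le 1$) and its square root (using Cauchy--Schwarz); your alternative sign-junta argument for the lower bound is a harmless extra.
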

The upper bound was shown in~\cite{quantumtester} as an intermediate step, and the lower bound was shown in~\cite{testingjuntas}.

We reprove this lemma here for completeness.
\begin{proof}
Let $f_{\avg,T}$ be as given in~\Cref{claim: correlation with T-juntas}.
It holds that $\adv(f,\mathcal{J}_T)=\mathbb{E}_{\bm{x}}[|f_{\avg,T}(\bm{x})|]$. Moreover,
\[
f_{\avg,T}(x)=\sum_{S\subseteq [n]} \hat{f}_S\mathbb{E}_{\bm{y}\sim \{0,1\}^n}[\chi_S(\bm{y})\mid \bm{y}_S=x_S]=\sum_{S\subseteq T} \hat{f}_S\chi_S(x).
\]
Using Parseval's identity, we have $\mathbb{E}_{\bm{x}}[f_{\avg,T}(\bm{x})^2] = \sum_{S\subseteq T} \hat{f}_S^2 = 1-\Inf_{\overline{T}}(f)$.
Finally, since $f_{\avg,T}(x) \in [-1,1]$ for all $x\in \{0,1\}^n$,  we get
\begin{align*}
    \mathbb{E}_{\bm{x}}[f_{\avg,T}(\bm{x})^2]
    \le \mathbb{E}_{\bm{x}}[|f_{\avg,T}(\bm{x})|]
    \le \sqrt{\mathbb{E}_{\bm{x}}[f_{\avg,T}(\bm{x})^2]}.
\end{align*}
This completes the proof.
\end{proof}


Regarding query complexity, the tester only queries $f$ during the execution of $\textsc{FourierSample}^f$ at line~\ref{line: sample}.
Consequently, it makes $M=\frac{100k\log k\cdot\log (1/\tau)}{\tau^2}$ quantum queries in total, as desired.
The remainder of this section is dedicated to proving the correctness of the algorithm.

Before presenting the formal proof, we provide a brief overview of how the algorithm proceeds. 
First, the algorithm draws $M$ random subsets $\bm{S}^{(1)},\ldots,\bm{S}^{(M)}\sim \mathcal{S}_f$.
Then from line~\ref{line: block1 start} to~\ref{line: block1 end}, the tester utilizes these samples to compute $\bm{p}_j$ as an estimate $\Inf^{\le k}_j[f]$, and identifies a set $\widetilde{\bm{C}}$ of coordinates with high estimated level-$k$ influence.

Next, from line~\ref{line: block2 start} to~\ref{line: block2 end}, the tester uses the sampled subsets to compute $\bm{q}_R$ as an estimate $1-\Inf_{\overline{R}}[f]$ for all $R\subseteq C$ with size $|R|=k$.
Finally, the algorithm computes $\tilde{\bm{\adv}}$ to approximate $\adv(f,\mathcal{J}_k)$ and compares this value with a certain threshold to determine whether to accept or reject $f$.

\begin{lemma}
Fix any $k,\epsilon_1,\epsilon_2$ with $(1-2\epsilon_1)^2>1-2\epsilon_2$.
Let $\mathcal{A}^f$ denote~\Cref{alg: quantum tester}. Then:
\begin{itemize}
\item for every $f$ with $\adv(f,\mathcal{J}_k)>1-2\epsilon_1$, with high probability $\mathcal{A}^f$ outputs ``YES''; and
\item for every $f$ with $\adv(f,\mathcal{J}_k)<1-2\epsilon_2$, with high probability $\mathcal{A}^f$ outputs ``NO''.
\end{itemize}
\end{lemma}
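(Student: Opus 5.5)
Write $\tau_1 \coloneqq 1-2\epsilon_1$ and $\tau_2 \coloneqq 1-2\epsilon_2$, so that $2\epsilon = \tau_1^2-\tau_2$ and $\tau=\epsilon/3$. The plan is to show that, with high probability over the $M$ Fourier samples, the estimate satisfies $|\widetilde{\bm{\adv}} - \Phi| \le \tau$ up to a one-sided loss. Here $\Phi$ is the ``influence proxy''
\[
\Phi \coloneqq \max_{R\subseteq C:|R|\le k} \sum_{S\subseteq R}\hat f_S^2 ,
\]
and $C$ is the set from \Cref{lemma: approximation by high infleunced coordinates} (with a suitable threshold). The two cases are then handled through \Cref{lemma: advantage equal influence}. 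Note that $\bm{q}_R$ is exactly the empirical frequency of the event $\bm{S}^{(i)}\subseteq R$, whose mean is $\sum_{S\subseteq R}\hat f_S^2 = 1-\Inf_{\overline R}(f)$.

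\textbf{Step 1 (concentration).} First I control $\widetilde{\bm{C}}$. Each $\bm{p}_j$ is an unbiased estimate of $\Inf^{\le k}_j[f]$, but the $\bm{p}_j$ cannot all be union-bounded over $[n]$. I will argue in two parts instead.
\begin{itemize}
\item $|\widetilde{\bm{C}}|\le 2k^2/\tau^2$ deterministically. Each sample with $|\bm{S}^{(i)}|\le k$ contributes at most $k/M$ in total to $\sum_j\bm{p}_j$, so $\sum_j \bm{p}_j\le k$.
\item Every $j$ with $\Inf^{\le k}_j[f]\ge \tau^2/k$ lies in $\widetilde{\bm{C}}$ with high probability. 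There are at most $k^2/\tau^2$ such $j$, so a Chernoff bound plus a union bound over them suffices.
\end{itemize}
Once $\widetilde{\bm{C}}$ is fixed by the first-stage behavior, there are at most $\binom{2k^2/\tau^2}{\le k}=2^{O(k\log(k/\tau))}$ sets $R$ to consider. Hoeffding's inequality with $M=\Theta(k\log k\log(1/\tau)/\tau^2)$ then gives $|\bm{q}_R-(1-\Inf_{\overline R}(f))|\le\tau$ for all of them simultaneously. The dependence issue between $\widetilde{\bm{C}}$ and the $\bm{q}_R$ is subtle. I would handle it by union-bounding over all $R$ inside the deterministic superset of coordinates with $\Inf^{\le k}_j\ge \tau^2/(4k)$, which has size at most $4k^2/\tau^2$, after showing that $\widetilde{\bm{C}}$ lies inside this superset with high probability. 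That containment needs a union bound over only the coordinates near the threshold, which I expect to be the main obstacle. If the constants do not work out, I will enlarge $M$ by a $\poly(k/\tau)$ factor; this is still within budget.

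\textbf{Step 2 (yes case).} Suppose $\adv(f,\mathcal{J}_k)\ge\tau_1$. By \Cref{lemma: approximation by high infleunced coordinates} with parameter $\tau$, some $R\subseteq C\subseteq\widetilde{\bm{C}}$ with $|R|\le k$ has $\adv(f,\mathcal{J}_R)\ge\tau_1-\tau$. By the upper bound in \Cref{lemma: advantage equal influence}, $1-\Inf_{\overline R}(f)\ge(\tau_1-\tau)^2\ge\tau_1^2-2\tau$. Hence
\[
\widetilde{\bm{\adv}}\ge \bm{q}_R\ge\tau_1^2-3\tau = \tau_1^2-\epsilon=(\tau_1^2+\tau_2)/2 ,
\]
and the tester outputs ``Yes''. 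The boundary case of equality should be absorbed by using slightly smaller constants.

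\textbf{Step 3 (no case).} Suppose $\adv(f,\mathcal{J}_k)\le\tau_2$. For every $R$ with $|R|\le k$, the lower bound in \Cref{lemma: advantage equal influence} gives $1-\Inf_{\overline R}(f)\le\adv(f,\mathcal{J}_R)\le\tau_2$. Therefore $\widetilde{\bm{\adv}}\le\tau_2+\tau<(\tau_1^2+\tau_2)/2$, and the tester outputs ``No''. This direction needs only the upper deviation of the $\bm{q}_R$, uniformly over all $R\subseteq\widetilde{\bm{C}}$, which Step 1 provides.
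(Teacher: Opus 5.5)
Your Steps 2 and 3 match the paper's proof. The paper establishes $\adv^2(f,\mathcal{J}_k)-\epsilon<\widetilde{\bm{\adv}}<\adv(f,\mathcal{J}_k)+\epsilon$ from the same ingredients: the superset claim for $\widetilde{\bm C}$, Hoeffding for each $\bm q_R$, and \Cref{lemma: approximation by high infleunced coordinates} combined with \Cref{lemma: advantage equal influence}. Your boundary worry in Step 2 is moot, because $(\tau_1-\tau)^2>\tau_1^2-2\tau$ holds strictly. The paper simply union-bounds over $R\subseteq\widetilde{\bm C}$ as though $\widetilde{\bm C}$ were fixed. You are right that this hides a dependence. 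It matters only in the No case, since the Yes case uses a single deterministic $R\subseteq C$.

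The gap is in your repair. To get $\widetilde{\bm C}\subseteq C'$ you must rule out $\bm p_j\ge\tau^2/(2k)$ for \emph{every} $j\notin C'$, not only for coordinates near the threshold. Since $n$ is arbitrary, there can be $\Theta(n)$ coordinates of tiny but nonzero level-$k$ influence. A union bound that applies a uniform per-coordinate tail bound $e^{-\Omega(M\tau^2/k)}$ to all of them grows with $n$, and enlarging $M$ by a $\poly(k/\tau)$ factor does not remove this.

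What is missing is a weighted union bound that exploits $\sum_j\Inf_j^{\le k}[f]\le k$. Set $\mu_j\coloneqq\Inf_j^{\le k}[f]$, $a\coloneqq\tau^2/(2k)$ and $t\coloneqq\lceil Ma\rceil\ge 50\log k\log(1/\tau)$. Then $\Pr[\bm p_j\ge a]\le\binom{M}{t}\mu_j^t\le(e\mu_j/a)^t$, which is at most $(e\mu_j/a)2^{1-t}$ whenever $e\mu_j/a\le 1/2$. So take $C'\coloneqq\{j:\mu_j\ge a/(2e)\}$, which has $|C'|\le 4ek^2/\tau^2$. This gives $\sum_{j\notin C'}\Pr[\bm p_j\ge a]\le 2^{1-t}(e/a)\sum_j\mu_j\le 2^{1-t}\cdot 2ek^2/\tau^2$, which is negligible.

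At your threshold $\tau^2/(4k)$ this crude bound has base up to $e/2>1$, so there you would need the relative-entropy form of Chernoff. Once $\widetilde{\bm C}\subseteq C'$ holds, Hoeffding plus a union bound over the $2^{O(k\log(k/\tau))}$ sets $R\subseteq C'$ with $|R|\le k$ goes through with the given $M$, and your Step 3 follows.
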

\begin{proof}
    Fix any Boolean function $f:\{0,1\}^n\to \{\pm 1\}$.
    Recall that~\Cref{alg: quantum tester} computes $\widetilde{\bm{\adv}}$ as an estimation of $\adv(f,\mathcal{J}_k)$ at line~\ref{line: approx adv}.
    It suffices to show that for any Boolean function $f$, with high probability $\adv^2(f,\mathcal{J}_k)-\epsilon < \widetilde{\bm{\adv}}< \adv(f,\mathcal{J}_k)+\epsilon$ for $\epsilon\coloneqq((1-2\epsilon_1)^2-(1-2\epsilon_2))/2$.

    Let $C\coloneqq \{j\in [n]\mid \Inf^{\le k}_j[f]\ge \tau^2/k\}$ denote the set of coordinates with level-$k$ influence at least $\tau^2/k$.
    We show that \Cref{alg: quantum tester} computes a superset $\widetilde{\bm{C}}\supseteq C$ with high probability at line~\ref{line: block1 end}.
    \begin{claim}
        It always holds that $|\widetilde{\bm{C}}|\le \frac{2k^2}{\tau^2}$.
        Moreover, with probability at least $1-k^{-2}$, $C\subseteq \widetilde{\bm{C}}$.
    \end{claim}
    \begin{proof}
        The first statement holds because $\sum_{j\in [n]} \bm{p}_j\le k$.
    
        For the second statement, observe that for each $j\in [n]$, $\bm{p}_j$ is the average of $M$ independent random bits with mean $\Pr_{\bm{S}\sim \mathcal{S}_f}[j\in \bm{S}\land |\bm{S}|\le k]=\Inf^{\le k}_{j}[f]$.
        By Chernoff bound, with probability $1-\exp(\log k\cdot \log \tau)$, we have $\bm{p}_j\ge \frac{\tau^2}{2k}$.
        The desired statement then follows from a union bound over all $j\in C$.
    \end{proof}

Next, we observe that for each $R\subseteq \widetilde{\bm{C}}$ with size $|R|\le k$, $\bm{q}_R$ is the average of $M$ independent random bits with mean $\Pr_{\bm{S}\sim \mathcal{S}_f}[\bm{S}\subseteq R]=1-\Inf_{\overline{R}}[f]$.
By Hoeffding's inequality, with probability at least $1-\exp(100k\log k\cdot \log \tau)$ we have $|\bm{q}_R-(1-\Inf_{\overline{R}}[f])|\le \tau$.
In which case, we can further apply~\Cref{lemma: advantage equal influence} to obtain
\[
\adv^2(f,\mathcal{J}_R)-\tau< \bm{q}_R< \adv(f,\mathcal{J}_R)+\tau.
\]
Then by applying union bound over all $R\subseteq \widetilde{\bm{C}}$ with size $|R|=k$, with probability at least $1-\exp(-\Omega(k\log k))$,
\[
\max_{R\subseteq \widetilde{\bm{C}}:|R|\le k }\adv^2(f,\mathcal{J}_R)-\tau< \widetilde{\bm{\adv}}< \max_{R\subseteq \widetilde{\bm{C}}:|R|\le k }\adv(f,\mathcal{J}_R)+\tau.
\]
When $C\subseteq \widetilde{\bm{C}}$, by~\Cref{lemma: approximation by high infleunced coordinates}, we further have
\[
\adv^2(f,\mathcal{J}_k)-\epsilon= \adv^2(f,\mathcal{J}_k)-2\tau-\tau< \widetilde{\bm{\adv}}< \adv(f,\mathcal{J}_k)+\epsilon.\qedhere
\]
\end{proof}

\DeclareUrlCommand{\Doi}{\urlstyle{sf}}
\renewcommand{\path}[1]{\small\Doi{#1}}
\renewcommand{\url}[1]{\href{#1}{\small\Doi{#1}}}
\bibliographystyle{alphaurl}
\bibliography{references}

 \end{document}